\definecolor{vertFonce}	{rgb}{0,0.5,0}
\definecolor{numLignes}	{rgb}{0.17,0.57,0.7}	
\definecolor{gris}		{rgb}{0.5,0.5,0.5}
\definecolor{grisFonce}	{rgb}{0.2,0.2,0.2}
\definecolor{orange}	{rgb}{1,0.65,0.31}		
\definecolor{orangeFonce}{rgb}{1,0.4,0}
\definecolor{bleuFonce}	{rgb}{0,0,0.4}
\definecolor{rougeFonce}{rgb}{0.3,0,0}
\definecolor{rougeWord}	{rgb}{0.5,0,0}
\definecolor{vertClair}	{rgb}{0.8,1,0.8}
\definecolor{rougeClair}{rgb}{1,0.5,0.5}
\definecolor{violet}	{rgb}{0.5,0,0.5}
\newtheorem{theorem}{Theorem}[section]
\newtheorem{cor}{Corollary}[section]
\newtheorem{prop}{Proposition}[section]
\newtheorem{remark}{Remark}[section]
\newenvironment{system*}{%
	\equation\nonumber\left\{\ \begin{aligned}
}{%
	\end{aligned} \right. \endequation%
}
\newcommand		{\N}		{\mathbb N}			
\newcommand		{\RR}		{\mathbb R}			
\newcommand		{\R}		{\RR}
\newcommand		{\Rd}		{\R^d}
\newcommand		{\Rdd}		{\R^{2d}}
\newcommand		{\CC}		{\mathbb C}			
\newcommand		{\cE}		{\mathcal E}
\newcommand		{\cH}		{\mathcal H}		
\newcommand		{\cL}		{\mathcal L}
\newcommand		{\cN}		{\mathcal N}		
\newcommand		{\opP}		{\mathcal P}		
\renewcommand	{\L}		{\mathcal L}		
\newcommand		{\cB}		{\mathcal B}		
\newcommand		\sfA		{\mathsf A}
\newcommand		\sfH		{\mathsf H}			
\newcommand		\sfR		{\mathsf R}			
\newcommand		\sfS		{\mathsf S}			
\newcommand		\sfX		{\mathsf X}			
\newcommand		{\lt}			{\left}				%
\newcommand		{\rt}			{\right}			%
\renewcommand	{\(}			{\lt(}
\renewcommand	{\)}			{\rt)}
\newcommand		{\bangle}[1]	{\lt\langle #1\rt\rangle}
\newcommand		{\com}[1]		{\lt[{#1}\rt]}		
\newcommand		{\n}[1]			{\lt\lvert #1 \rt\rvert}
\newcommand		{\nrm}[1]		{\lt\lVert #1 \rt\rVert}
\newcommand		{\snrm}[1]		{\lVert #1 \rVert}
\newcommand		{\Nrm}[2]		{\nrm{#1}_{#2}}
\renewcommand		{\d}		{\mathop{}\!\mathrm{d}}	
\newcommand			{\dpt}		{\partial_t}
\newcommand			{\dt}		{\frac{\d}{\d t}}	
\newcommand			{\grad}		{\nabla}
\newcommand			{\lapl}		{\Delta}
\newcommand			{\Dx}		{\nabla_x}
\newcommand			{\Dchi}		{\nabla_\chi}
\newcommand			{\Dv}		{\nabla_\xi}
\newcommand			{\conj}[1]	{\overline{#1}}		
\DeclareMathOperator{\cF}		{\mathcal{F}}		
\DeclareMathOperator{\re}		{Re}				
\DeclareMathOperator{\im}		{Im}				
\DeclareMathOperator{\tr}		{Tr}				
\DeclareMathOperator{\diag}		{diag}
\newcommand		{\Tr}[1]		{\tr\!\( #1 \)}		
\newcommand		{\Diag}[1]		{\diag\!\( #1 \)}
\newcommand		{\intd}			{\int_{\Rd}}
\newcommand		{\intdd}		{\int_{\Rdd}}
\newcommand		{\intgam}		{\int_{\Rdd}}
\newcommand		{\intggam}		{\iint_{\Rdd\times\Rdd}}
\newcommand		{\iintd}		{\iint_{\Rdd}}
\newcommand		{\jj}			{\mathrm{j}}	
\newcommand		{\init}			{\mathrm{in}}
\newcommand		{\Eps}			{\mathcal{E}}
\newcommand		{\cC}			{\mathcal{C}}
\newcommand		{\Inprod}[2]	{\Braket{#1 | #2}}
\newcommand		{\op}		{\boldsymbol{\rho}}	
\newcommand		{\opUp}		{\boldsymbol{\Upsilon}}
\newcommand		{\opmu}		{\boldsymbol{\mu}}	
\newcommand		{\opnu}		{\boldsymbol{\nu}}	
\newcommand		{\opgam}	{\boldsymbol{\gamma}}
\newcommand		{\opc}		{\boldsymbol{c}}
\newcommand		{\opC}		{\boldsymbol{C}}
\DeclareMathOperator{\W}	{W}
\newcommand		{\Wh}		{\W_{2,\hbar}}		
\newcommand		{\opp}		{\boldsymbol{p}}
\newcommand		{\h}		{\mathfrak{h}}		
\newcommand		{\hh}		{\mathfrak{H}}		
\newcommand{\al}	{\alpha}
\newcommand{\ome}	{\op}
\newcommand{\id}{\mathbf{1}}
\newcommand{\x}{\mathbf{x}}
\newcommand{\y}{\mathbf{y}}
\newcommand{\updownarrows}{\mathbin\uparrow\downarrow}
\newcommand{\downuparrows}{\mathbin\downarrow\uparrow}
\renewcommand{\upuparrows}{\mathbin\uparrow\uparrow}
\renewcommand{\downdownarrows}{\mathbin\downarrow\downarrow}
\numberwithin{equation}{section}
\title[\textsc{Semiclassical Limit of the Bogoliubov--de Gennes Equation}]{\Large Semiclassical Limit of the Bogoliubov--de Gennes Equation}
\author[\textsc{J.J.~Chong}]{\vspace{-5pt}\large\textsc{Jacky J. Chong}} 
\address[J.J.~Chong]{\vspace{-10pt}School of Mathematical Sciences,\\ Peking University, Beijing, China\vspace{-15pt}}
\email{jwchong@math.pku.edu.cn}
\author[\textsc{L.~Lafleche}]{\large\textsc{Laurent Lafleche}} 
\address[L.~Lafleche]{\vspace{-10pt}Unit\'e de Math\'ematiques pures et appliqu\'ees,\\ \'Ecole Normale Supérieure de Lyon, Lyon, France\vspace{-15pt}}
\email{laurent.lafleche@ens-lyon.fr}
\author[\textsc{C.~Saffirio}]{\large\textsc{Chiara Saffirio}} 
\address[C.~Saffirio]{\vspace{-10pt}Department of Mathematics and Computer Science,\\ University of Basel, Basel, Switzerland}
\email{chiara.saffirio@unibas.ch}
\subjclass[2020]{82C10 (81S30, 35Q55, 35Q83, 82C05)}
\keywords{mean-field limit, semiclassical limit, Bogoliubov--de Gennes equation, many-body Schrödinger equation, Hartree--Fock equation, Vlasov equation}
\begin{document}

\begin{abstract}
	In this paper, we rewrite the time-dependent Bogoliubov--de Gennes equation in an appropriate semiclassical form and establish its semiclassical limit to a two-particle kinetic transport equation with an effective mean-field background potential satisfying the one-particle Vlasov equation. Moreover, for some semiclassical regimes, we obtain a higher-order correction to the two-particle kinetic transport equation, capturing a nontrivial two-body interaction effect. The convergence is proven for $C^2$ interaction potentials in terms of a semiclassical optimal transport pseudo-metric. Furthermore, combining our current results with the results of Marcantoni et al.~[arXiv:2310.15280], we establish a joint semiclassical and mean-field approximation of the dynamics of a system of spin-$\frac{1}{2}$ Fermions by the Vlasov equation in some negative order Sobolev topology. 
\end{abstract}

\begingroup
\def\uppercasenonmath#1{} 
\let\MakeUppercase\relax 
\maketitle
\thispagestyle{empty} 
\endgroup

\renewcommand{\contentsname}{\centerline{Table of Contents}}
\setcounter{tocdepth}{2}	
\tableofcontents


\section{Introduction}

\subsection{The time-dependent Bogoliubov--de Gennes equation} 

	We consider the time-dependent Bogoliubov--de Gennes (BdG) equation, sometimes referred to as the generalized Hartree--Fock equation or the Hartree--Fock--Bogoliubov equation. It describes the time evolution of generalized one-particle reduced density operators, which are self-adjoint operators $\Gamma$ acting on $\h\oplus \h$, satisfying the operator bound $0\le \Gamma \le \id_{\h\oplus\h}$, and having the form
	\begin{equation}\label{eq:expression_Gamma}
	\Gamma = 
	\begin{pmatrix}
		\gamma & \alpha\\
		-\conj{\alpha} & \id-\conj{\gamma}
	\end{pmatrix},
	\end{equation}
	where $\h=L^2(\Rd,\CC)$ denotes the one-particle state space, $\conj{A}$ represents the operator whose integral kernel is the conjugate of the kernel of the operator $A$. Here, the bounded linear operators $\gamma$ and $\alpha$ acting on $\h$ are called, respectively, the one-particle density operator and the pairing operator. It follows from properties of $\Gamma$ that $\gamma$ and $\alpha$ satisfy
	\begin{equation}\label{eq:eq_alpha_0}
		0\le \gamma \le \id, \quad \alpha^\ast = -\conj{\alpha},\quad \gamma\,\alpha = \alpha\,\conj{\gamma}, \quad \text{ and } \quad \n{\alpha^\ast}^2 \leq \gamma \(\id -\gamma\),
	\end{equation}
	where $\n{A}= \sqrt{A^\ast A}$.
	
	The BdG equation models many-body dynamics with an interparticle interaction potential $U$, which we assume to satisfy, for some constant $C\ge 0$, the conditions
	\begin{equation}\label{conditions:interaction_potential_U}
		U(x)=U(-x), \qquad U \in L^2_{\mathrm{loc}}(\Rd), \quad \text{ and } \quad U^2 \le C \(\id -\lapl\).
	\end{equation}
	Let $\opp = -i\hbar\grad_x$  denote the momentum operator with $\hbar=h/(2\pi)$, where $h$ is the Planck constant, and define the Hartree--Fock Hamiltonian associated with $\gamma$ by the operator
	\begin{equation*}
		\sfH_\gamma = \frac{\n{\opp}^2}{2} + V_\gamma -\sfX_{\gamma}.
	\end{equation*}
	In the above expression, $V_\gamma$ is the multiplication operator by the mean-field potential defined by
	\begin{equation*}
		V_\gamma(x) = U*\diag(\gamma)(x) = \intd U(x-y)\,\gamma(y, y)\d y
	\end{equation*}
	and $\sfX_{\gamma}$ is the exchange operator defined through its integral kernel 
	\begin{equation*}
		\sfX_{\gamma}(x, y) = U(x-y)\,\gamma(x , y ).
	\end{equation*}
	Moreover, we define the generalized Hartree--Fock Hamiltionian acting on $\h\oplus \h$ by the matrix operator
	\begin{equation*}
		\sfH_\Gamma =
		\(\begin{array}{cc}
			\sfH_\gamma & \sfX_\alpha\\
			\sfX_\alpha^\ast & -\conj{\sfH}_\gamma
		\end{array}\)\, .
	\end{equation*}
	Then the BdG equation reads
	\begin{equation}\label{eq:BdG_Gamma}
		i\hbar\,\dpt \Gamma = \com{\sfH_{\Gamma},\Gamma},
	\end{equation}
	where $\com{A, B}:=AB-BA$ is the operator commutator. 
	Equivalently, Equation \eqref{eq:BdG_Gamma} can be written as the following coupled system of equations
	\begin{subequations}\label{eq:BdG}
	\begin{align}\label{eq:BdG_gamma}
			i\hbar\,\dpt \gamma &= \com{\sfH_{\gamma},\gamma} + \sfX_\alpha\,\alpha^* - \alpha\,\sfX_\alpha^* \, ,
			\\\label{eq:BdG_alpha}
			i\hbar\,\dpt \alpha &= \sfH_{\gamma}\,\alpha + \alpha \,\conj{\sfH}_{\gamma} + \sfX_\alpha \(\id-\conj{\gamma}\)-\gamma \,\sfX_\alpha\, .
	\end{align}
	\end{subequations}
	Notice that if $\alpha = 0$, then $\gamma$ solves the Hartree--Fock equation
	\begin{equation}\label{eq:unscaled_Hartree-Fock}
		i\hbar\,\dpt \gamma = \com{\sfH_{\gamma},\gamma} .
	\end{equation}
	This justifies the claim that the BdG equation is a generalization of the Hartree--Fock equation with a non-zero pairing operator. Observe also the fact that the self-adjointness of $\Gamma$ and $0 \leq \Gamma \leq \id_{\h\oplus\h}$ are preserved under the BdG dynamics, in particular the Properties~\eqref{eq:eq_alpha_0} remain true along the dynamics. We refer to~\cite{benedikter_diracfrenkel_2018} for discussions regarding the well-posedness and additional properties of the equation. In fact, Conditions \eqref{conditions:interaction_potential_U} are taken from \cite{benedikter_diracfrenkel_2018}, which guarantees the global well-posedness of solutions. 
		
\subsection{Semiclassical regimes and classical phase space dynamics}

	The purpose of this paper is the study of the BdG equation~\eqref{eq:BdG_Gamma} in the semiclassical regime, that is on space-time scales where the Planck constant $h$ becomes negligible. To make connection with earlier studies on the effective approximation of many-body interacting fermionic systems (see Section~\ref{sec:application}), we set $N=\Tr{\gamma}$ and write
	\begin{equation*}
		U(x)=\frac{1}{N}\,K(x),
	\end{equation*}
	where $K:\Rd\to\R$ is independent of $N$ and $\hbar$ and satisfies Conditions~\eqref{conditions:interaction_potential_U}, and the factor $N^{-1}$ is the mean-field coupling constant.

	In the context of the semiclassical limit, it is convenient to define the rescaled operator 
	\begin{equation}\label{eq:scale-gamma}
		\op :=\frac{1}{Nh^d}\,\gamma,	
	\end{equation}
	so that $h^d \Tr{\op} = 1$. We will call positive operators verifying this trace normalization density operators and denote this class of operators by $\opP(\h)$. We also define the semiclassical Schatten norms, which are quantum analogues of the Lebesgue norms on the phase space, by
	\begin{equation}\label{eq:Schatten_rescaled}
		\Nrm{\op}{\L^p} := h^{d/p} \(\Tr{\n{\op}^p}\)^{1/p}.
	\end{equation}
	These norms are helpful in identifying the necessary scaling of quantum objects that will lead to a nontrivial semiclassical limit. 
	
	In the case of zero pairing, we see that the scaling~\eqref{eq:scale-gamma} leads to rewrite Equation~\eqref{eq:unscaled_Hartree-Fock} as follows 
	\begin{equation}\label{eq:Hartree--Fock}
		i\hbar\,\partial_t \op = \com{\sfH_{\op},\op}\quad \text{ with } \quad \sfH_{\op} = \frac{\n{\opp}^2}{2}+ V_{\op} - h^d\, \sfX_{\op}\,,
	\end{equation}
	where $V_{\op} = K*\varrho(x)$ and $\varrho(x)$ is the spatial distribution of particles defined by 
	\begin{equation}\label{eq:spatial_density_quantum}
		\varrho(x) = \Diag{\op}(x) := h^d\, \op(x,x)
	\end{equation}
	and the exchange operator $\sfX_{\op}$ has the integral kernel
	\begin{equation}\label{eq:exchange_operator}
		\sfX_{\op}(x,y) = K(x-y)\, \op(x,y)\,.
	\end{equation}
	Furthermore, with this scaling, it is known that in the semiclassical limit $\hbar\to 0$ one can recover classical phase space dynamics from the Hartree--Fock dynamics (see e.g. \cite{lions_sur_1993, benedikter_hartree_2016, lafleche_strong_2023}). More precisely, one obtains as the semiclassical approximation of Equation~\eqref{eq:Hartree--Fock} the Vlasov equation
	\begin{equation}\label{eq:Vlasov}
		\dpt f + \xi\cdot\Dchi f + E_f \cdot\Dv f = 0 \quad \text{ with } \quad f(0, \chi, \xi)= f^{\init}(\chi, \xi)\ge 0\,,
	\end{equation}
	where $f$ is a time-dependent probability density on $\Rdd = \Rd_\chi\times \Rd_\xi$, $E_f = -\nabla V_f$ is the self-consistent force field associated to the mean-field potential $V_f(\chi)=(K*\rho_f)({t,}\chi)$ with $\rho_f$ the spatial density defined by
	\begin{equation*}
		\rho_f({t,} \chi)=\intd f(t,\chi,\xi) \d\xi\,.
	\end{equation*}

	In this work, we want to extend the above result to the case of the BdG equation \eqref{eq:BdG}, with a non-zero pairing operator $\alpha$. To this end, we define the two-particle operator
	\begin{equation}\label{eq:op-alpha-sc}
		\op_{\alpha}=\lambda\ket{\alpha}\!\!\bra{\alpha}
	\end{equation}
	as the orthogonal projection in $\h\otimes\h$ onto the function defined by $(x,y)\mapsto\alpha(x,y)$, and where $\lambda>0$ is chosen so that $h^{2d} \Tr{\op_{\alpha}}=1$. This allows us to rewrite the BdG equation~\eqref{eq:BdG} in the equivalent form \eqref{eq:semiclassical_BdG}, as shown in Section~\ref{sec:rescaling}, and compare it with its classical analogue 
	\begin{subequations}\label{eq:Vlasov_F_error_1}
		\begin{align}\label{eq:Vlasov_and_error_1}
			&\dpt f + \xi\cdot\grad_{\chi} f + E_f\cdot\Dv f = 0
			\\\label{eq:F_and_error_1}
			&\dpt F + \xi_{12}\cdot\nabla_{\chi_{12}} F + E_{12}\cdot\nabla_{\xi_{12}} F = \frac{1}{N} \,\nabla K(\chi_1-\chi_2)\cdot \(\nabla_{\xi_1}-\nabla_{\xi_2}\)F
		\end{align}
	\end{subequations}
	where $F(z_{12}) = F(z_1,z_2)$ is the two-particle distribution function defined on $\Rdd\times\Rdd$, $\chi_{12}:=(\chi_1,\chi_2)\in\Rdd_\chi$, $\xi_{12}:=(\xi_1,\xi_2)\in\Rdd_\xi$ and $E_{12}:=(E_f(\chi_1),E_f(\chi_2))$. 
	
	Notice that System~\eqref{eq:Vlasov_F_error_1} is well-posed. Indeed, being that Equation~\eqref{eq:Vlasov_and_error_1} is a Vlasov equation with smooth interaction $K$, it is well-posed by standard techniques (see e.~g.~\cite{dobrushin_vlasov_1979}). Given $f$, a solution of equation~\eqref{eq:Vlasov_and_error_1}, and the corresponding vector field $E_{12}$, equation~\eqref{eq:F_and_error_1} is simply a linear transport equation with smooth vector field, which is well-posed by standard characteristics methods.

\subsection{Semiclassical optimal transport pseudo-metric}

	We now introduce the tools that we will use in our main theorem to prove the semiclassical limit of the BdG equation \eqref{eq:semiclassical_BdG}. 
	
	Denote $z = (\chi, \xi)\in \Rdd$. Let $f$ be a probability density function on $\Rdd$ and $\op\in \opP(\h)$. A coupling of $f$ and $\op$ is a measurable function $\opgam: z\mapsto \opgam(z)$ defined for almost all $z \in \Rdd$ with values in the space of bounded linear operators acting on $\h$ such that, for almost all $z \in \Rdd$, we have that $\opgam(z) \ge 0$ and it satisfies the conditions
	\begin{equation}
		h^d \tr_{\h}\!\(\opgam(z)\) = f(z)\quad \text{ and }\quad \intgam \opgam(z)\d z = \op \, .
	\end{equation}
	The set of all couplings of $f$ and $\op$ is denoted by $\cC(f, \op)$. Next, we define the semiclassical optimal transport pseudo-metric by
	\begin{equation}\label{eq:wasserstein-1}
		\Wh(f, \op) =\(\inf_{\opgam \in \cC(f, \op)}\intgam h^d \tr_\h\(\opc(z)\,\opgam(z)\)\d z\)^\frac12
	\end{equation} 
	with the cost function $\opc(z)$ defined by the unbounded operator whose action on test functions $\varphi$ gives
	\begin{equation*}
		(\opc(z)\varphi)(x) = \n{\chi-x}^2\varphi(x)+\n{\xi-\opp}^2\varphi(x)
	\end{equation*}
	and $\opp = -i\hbar\Dx$. The notation $\tr_\h(\opc\,\opgam)$ should be understood as $\tr_\h(\opc^{1/2}\,\opgam\,\opc^{1/2})$ in general if $\opc\,\opgam$ is not trace class. The above semiclassical optimal transport pseudo-metric between density operators and classical phase space functions was first introduced in~\cite{golse_schrodinger_2017}. It can be viewed as an intermediate notion between the classical Monge--Kantorovich distance (Wasserstein distance) of exponent 2 on the space of Borel probability measures and the quantum optimal transport pseudo-metric on the space of density operators defined in \cite{golse_mean_2016}. The properties of these pseudo-metrics can be found in~\cite{golse_schrodinger_2017,golse_optimal_2022, caglioti_towards_2022, lafleche_quantum_2023}.
 
	Likewise, for any two-particle probability density function $F(z_1, z_2)$ on $\Rdd\times\Rdd$ and two-particle density operator $\op_2\in \opP(\h\otimes\h)$, we denote by $\Wh(F,\op_2)$ their semiclassical optimal transport pseudo-metric, defined in the same way with $\h$ replaced by $\h\otimes\h$ and $\Rdd$ replaced by $\Rdd\times\Rdd$.
	
\subsection{Main results}
 
	We are now ready to state our main results. In our first theorem, we will be concerned with the limit to the following classical equations corresponding to Equations \eqref{eq:Vlasov_and_error_1}--\eqref{eq:F_and_error_1} with $N=\infty$.
	\begin{equation}\label{eq:easy_equations}
		\begin{aligned}
			&\dpt f + \xi\cdot\grad_{\chi} f + E_f\cdot\Dv f = 0\,,
			\\
			&\dpt F + \xi_{12}\cdot\nabla_{\chi_{12}} F + E_{12}\cdot\nabla_{\xi_{12}} F = 0\,.
		\end{aligned}
	\end{equation}

	\begin{theorem}\label{thm:main}
			Let $d\geq 3$ and assume $N\hbar\geq C$ where $C$ does not depend on $N$ and $\hbar$, $K$ be an even, real-valued function such that $\nabla^2K\in L^\infty(\Rd,\Rdd)$, $\widehat{K}\in L^1(\Rd)$ and $x\mapsto \n{x} K(x)\in L^\infty(\Rd)$. Let $(\gamma,\alpha)$ be a solution of the BdG equations \eqref{eq:BdG} and let $\op$ and $\op_\alpha$ be their rescaled versions defined in~\eqref{eq:scale-gamma} and \eqref{eq:op-alpha-sc} with initial data $(\op^\init, \op_\alpha^\init) \in \opP(\h)\times \opP(\h^{\otimes 2})$ such that 
			\begin{equation*}
				h^d\Tr{\op^\init\n{\opp}^4}, \qquad h^d\Tr{\op^\init\n{x}^4} \quad \text{ and } \quad \Nrm{\op^{\init}}{\L^d}
			\end{equation*}
			are uniformly bounded in $\hbar$. Let $(f, F)$ be the solutions of the system~\eqref{eq:easy_equations} with initial conditions $(f^{\init}, F^{\init})$, which are probability density functions defined on $\Rdd$ and $\Rdd\times\Rdd$, respectively, and such that 
		\begin{equation*}
			\intgam \n{z}^2f^\init(z)\d z<\infty \quad \text{ and } \quad \intggam \(\n{z_{1}}^2+\n{z_2}^2\)\! F^\init(z_{1}, z_{2})\d z_1\d z_2<\infty\,.
		\end{equation*}
		Then there exist a constant $C$ dependent on $\Nrm{K}{C^2}$ and the semiclassical Schatten norms of $\op^\init$ but independent of $N$ and $\hbar$, such that for any $t\geq 0$
		\begin{equation}\label{eq:main}
			\Wh(f,\op)^2 + \Wh(F,\op_\alpha)^2 \le \(\Wh(f^\init, \op^\init)^2+\Wh(F^\init, \op_\alpha^\init)^2 + \hbar\) e^{C \,e^{C\,t}} .
		\end{equation}
	\end{theorem}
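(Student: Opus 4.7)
The plan is to implement the Golse--Paul coupling strategy---as extended to Hartree--Fock dynamics with exchange in \cite{lafleche_strong_2023}---for the BdG system, now running a coupled one-body and two-body argument simultaneously. I would pick near-optimal initial couplings $\opgam^\init\in\cC(f^\init,\op^\init)$ and $\opgam_2^\init\in\cC(F^\init,\op_\alpha^\init)$, and propagate them by a hybrid classical--quantum dynamics: the classical side uses the Lebesgue-measure-preserving Vlasov characteristic flow $\Phi_t$ associated with $(\xi, E_f)$ (respectively its two-body tensorization), and the quantum side uses the time-ordered propagator $U_t$ generated by $\sfH_\op$ on $\h$ (respectively the two-body generator $\sfH_\op\otimes\id+\id\otimes\conj{\sfH}_\op$ on $\h^{\otimes 2}$). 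Since $\Phi_t$ preserves Lebesgue measure and $U_t$ is unitary, $\opgam(t,\cdot)$ and $\opgam_2(t,\cdot)$ remain couplings of $(f(t),\op(t))$ and $(F(t),\op_\alpha(t))$ up to explicit source terms coming from the pairing blocks in~\eqref{eq:BdG_gamma}--\eqref{eq:BdG_alpha} (namely $\sfX_\alpha\alpha^*-\alpha\sfX_\alpha^*$ and $\sfX_\alpha(\id-\conj\gamma)-\gamma\sfX_\alpha$) and from the exchange $h^d\sfX_\op$ hidden in $\sfH_\op$.

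Setting $\cE_1(t):=h^d\int\tr_\h(\opc(z)\,\opgam(t,z))\,dz$ and $\cE_2(t):=h^{2d}\iint\tr_{\h^{\otimes 2}}(\opc(z_{12})\,\opgam_2(t,z_{12}))\,dz_{12}$, which upper bound $\Wh(f,\op)^2$ and $\Wh(F,\op_\alpha)^2$, I would differentiate in time. Three families of contributions appear. The kinetic-plus-transport terms combine into cross products of the form $(\chi-x)\cdot(\xi-\opp)+\mathrm{h.c.}$, bounded by $\opc$ itself. The mean-field potential terms reduce, via $\nabla^2 K\in L^\infty$, to Lipschitz comparisons between $E_f(\chi)$ and $-\nabla V_\op(x)$ evaluated through the coupling, producing a contribution proportional to $\cE_j$. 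Finally, the semiclassical and pairing error terms: under $\widehat K\in L^1$ and $|x|K\in L^\infty$ together with the $\L^d$ bound on $\op$ propagated along the flow, the Weyl-symbol calculus of \cite{lafleche_strong_2023} bounds the exchange commutator $[h^d\sfX_\op,\opc]$ in $\L^1$ by $C\hbar$, while the BdG pairing sources scale as $\|\nabla K\|_\infty/N\le C\hbar$ under $N\hbar\ge C$. Summing yields a coupled differential inequality
\begin{equation*}
	\dt(\cE_1+\cE_2) \le C_t\,(\cE_1+\cE_2) + C\hbar,
\end{equation*}
with $C_t$ controlled by the Lipschitz constant of the Vlasov force $E_f(t,\cdot)$, which itself grows at most exponentially in $t$. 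Grönwall's inequality then produces the announced double exponential.

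The main obstacle is propagating, uniformly in $\hbar$, the Schatten and moment bounds on $\op$ needed to estimate the exchange and pairing sources at genuine order $\hbar$: the $\L^d$-norm is preserved along the Hartree--Fock/BdG flow thanks to $\widehat K\in L^1$; the $\n{x}^4$ and $\n{\opp}^4$ moments are propagated by standard energy-type estimates using the $C^2$ regularity of $K$; and the rank-one structure of $\op_\alpha$ simplifies, but does not trivialize, the two-body bookkeeping. A subtle point is that the one- and two-body inequalities are genuinely coupled, because the BdG dynamics for $\alpha$ involves $\gamma$ through both $\sfX_\alpha$ and $\gamma\sfX_\alpha$ while the dynamics for $\gamma$ is driven back by $\alpha$ through the pairing source; adding $\cE_1$ and $\cE_2$ before applying Grönwall is therefore essential, and the hybrid dynamics above is chosen precisely so that the two inequalities close together as a single scalar Grönwall system.
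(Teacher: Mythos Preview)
Your overall strategy---a coupled Gr\"onwall argument on the one- and two-body semiclassical transport costs, via a hybrid classical--quantum coupling dynamics---is exactly the paper's. Two points, however, are misidentified and would leave genuine gaps if you tried to carry out the details as written.

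First, the $\L^d$-norm of $\op$ is \emph{not} preserved along the BdG flow. Under pure Hartree--Fock the commutator structure conserves all Schatten norms, but the pairing source $\tfrac{\theta_\alpha}{N}[K_{12},\op_\alpha]_{:1}$ in~\eqref{eq:semiclassical_gamma} destroys this; only an exponential bound $\|\op(t)\|_{\L^d}\le\|\op^\init\|_{\L^d}e^{C_Kt}$ survives (Proposition~\ref{prop:schatten_propag}\ref{part:Nh>1_regime}, using $\widehat K\in L^1$ and the regime $N\hbar\ge C$). This exponential growth---not the Lipschitz constant of the classical force $E_f$, which for $\nabla^2K\in L^\infty$ is bounded uniformly in time---is the true origin of the double exponential in~\eqref{eq:main}. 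The exchange contribution is then controlled not by Weyl calculus but by bounding $\|[x,\op]\|_{\L^\infty}$ and $\|[\opp,\op]\|_{\L^\infty}$ via Schatten embeddings in terms of the fourth moments $M_4,N_4$ and $\|\op\|_{\L^d}^{3/2}$.

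Second, you gloss over the most delicate two-body term. The equation~\eqref{eq:semiclassical_alpha} for $\op_\alpha$ is not a von Neumann equation: the piece $h^d\big([K_{12},\op_{12}]-\langle[K_{12},\op_{12}]\rangle_{\op_\alpha}\big)\op_\alpha$ is a one-sided multiplication, not a commutator with $\op_\alpha$. Differentiating $\cE_2$ this produces contributions of size $h^{d-1}\|K\|_{L^\infty}\|\op\|_{\L^\infty}\,\cE_2$ (the paper's terms~\eqref{def:dE_up_term5}--\eqref{def:dE_up_term6}), and it is precisely the combination of $\|\op\|_{\L^\infty}\le(Nh^d)^{-1}$ with the hypothesis $N\hbar\ge C$ that makes this coefficient $O(1)$. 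Your claim that ``the BdG pairing sources scale as $\|\nabla K\|_\infty/N$'' accounts only for the one-body source~\eqref{def:dE_term3}, not these. Incidentally, the correct two-body quantum generator is $\sfH_{12}=\sfH_\op\otimes\id+\id\otimes\sfH_\op$ without conjugation: the $\conj{\sfH}_\gamma$ in~\eqref{eq:BdG_alpha} appears when $\alpha$ is treated as an operator, but once you pass to the kernel $\alpha(x_1,x_2)$ and form the rank-one density $\op_\alpha$, the evolution is driven by the self-adjoint $\sfH_{12}$.
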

	
	\begin{remark}\label{remark:opt-transp-distance}
		The optimal transport pseudo-metrics that are used in the above theorem are not distances since $\Wh(f,\op) \geq d\,\hbar$ (see~\cite{golse_optimal_2022}). However, they still imply convergence in the semiclassical regime $\hbar\to 0$. More precisely, introducing the Wigner transform
		\begin{equation}\label{eq:wigner}
			f_{\op}(\chi,\xi) = \intd e^{-i\xi\cdot y/\hbar}\,\op(\chi+\tfrac{y}{2}, \chi-\tfrac{y}{2})\d y\, ,
		\end{equation}
		and the Husimi transform $\tilde{f}_{\op} = g_h * f_{\op}$ with $g_h(z) = \(\pi\hbar\)^{-d}\,e^{-\n{z}^2/\hbar}$, it holds (see~\cite[Theorem~2.4]{golse_schrodinger_2017})
		\begin{equation*} 
			\W_2(f,\tilde{f}_{\op})^2 \leq \Wh(f,\op)^2 + d\,\hbar
		\end{equation*}
		and so convergence of $\Wh(f,\op)$ to $0$ implies the convergence of the Husimi transform of $\op$ (and so also its Wigner transform, see~\cite{lions_sur_1993}) to $f$. On the other hand, the right-hand side of Inequality~\eqref{eq:main} is also small initially and, if the operator $\op$ is sufficiently regular, $\tilde{f}_{\op}$ is close to $f$, as follows from the following inequality which follows from~\cite[Theorem~1.1]{lafleche_quantum_2023} and~\cite[Theorem~3.5]{golse_semiclassical_2021}
		 \begin{equation*}
 			\Wh(f,\op) \leq \W_2(f,\tilde{f}_{\op}) + \sqrt{d\hbar} + D_{\op} \,\hbar
 		\end{equation*}
 		where $D_{\op} = \|\nabla f_{\sqrt{\op}}\|_{L^2(\Rdd)}$ is proportional to the Wigner--Yanase skew information of $\op$.
	\end{remark}
	
	\begin{remark}
		The double-exponential growth on the right-hand side of~\eqref{eq:main} is due to the propagation of the Schatten norm $\L^d$ for $\op$. We can get a better bound in terms of time dependence in the regime where $Nh^d$ is of order $1$. Indeed,  in this case $\Nrm{\op}{\L^d} \leq 1$ and then the $e^{C\,e^{Ct}}$ can be replaced by a function of the form $e^{\Lambda(t)}$ for some polynomial function $\Lambda$. 
	\end{remark}
	
	In this next theorem, we consider semiclassical regime which allows us to obtain a nontrivial order $1/N$ two-body interaction effect correction to the dynamics of $F$.
	\begin{theorem}\label{thm:main_2}
		Under the same assumptions as in Theorem~\ref{thm:main} but with $Nh\to 0$ and $(f, F)$ solutions of the system~\eqref{eq:Vlasov_F_error_1}, let  $\Nrm{\alpha^\init}{L^2}\le C N h$. Then there exists $T$ and $C_T$, independent of $\hbar$ and $N$ but dependent on $\Nrm{\nabla^2 K}{L^\infty}$ and the semiclassical Schatten norms of $\op^\init$, such that for any $t\in[0,T]$,
		\begin{equation}\label{eq:main_2}
			\Wh(f,\op)^2 + \Wh(F,\op_\alpha)^2 \le C_T \(\Wh(f^\init, \op^\init)^2+\Wh(F^\init, \op_\alpha^\init)^2 + \hbar\).
		\end{equation}
	\end{theorem}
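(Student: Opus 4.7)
My plan is to follow the strategy of Theorem~\ref{thm:main}, but now carefully match the two-body interaction term of order $1/N$ appearing in~\eqref{eq:F_and_error_1} with the corresponding exchange-type contribution coming from the BdG dynamics for $\op_\alpha$. Indeed, the distinguishing feature of the regime $Nh\to 0$ is that $1/N\gg\hbar$, so the $1/N$ correction is no longer negligible against the semiclassical error and can no longer be absorbed as in the proof of Theorem~\ref{thm:main}.

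First I would rewrite the BdG equation in the equivalent semiclassical form~\eqref{eq:semiclassical_BdG} furnished by Section~\ref{sec:rescaling}. At $t=0$, I would pick optimal couplings $\opgam^\init\in\cC(f^\init,\op^\init)$ and a two-particle analogue $\opUp^\init\in\cC(F^\init,\op_\alpha^\init)$, and then define $\opgam(t)$ and $\opUp(t)$ dynamically by composing the classical characteristic flow associated with the Vlasov field $E_f$ (resp.\ the two-body field with the $\tfrac{1}{N}\nabla K(\chi_1-\chi_2)$ correction) with the quantum flow generated by the one- and two-body Hartree--Fock-type Hamiltonians, in the spirit of~\cite{golse_schrodinger_2017}. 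The next step is to differentiate in time
\[
\Phi(t) \;:=\; \intgam h^d\,\tr_{\h}\!\(\opc(z)\,\opgam(t,z)\)\d z \;+\; \intggam h^{2d}\,\tr_{\h\otimes\h}\!\(\opc_{12}(z_1,z_2)\,\opUp(t,z_1,z_2)\)\d z_1\d z_2,
\]
and isolate the classical Dobrushin-type terms, which are controlled by the Lipschitz bound $\Nrm{\nabla^2K}{L^\infty}$ on $\nabla V_f$, from the quantum correction terms.

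The quantum corrections split into three families. (a) The Hartree exchange $-h^d\,\sfX_\op$ inside $\sfH_\op$, the pairing cross term $\sfX_\alpha\,\alpha^\ast-\alpha\,\sfX_\alpha^\ast$ in~\eqref{eq:BdG_gamma}, and one-body semiclassical symbol-calculus remainders; these are expected to be $O(\hbar)$ using $\widehat K\in L^1$, $\n{x}K\in L^\infty$, the propagated Schatten bound $\Nrm{\op}{\L^d}$, and the crucial hypothesis $\Nrm{\alpha^\init}{L^2}\leq CNh$, which combined with the bound $\n{\alpha^\ast}^2\leq\gamma(\id-\gamma)$ from~\eqref{eq:eq_alpha_0} keeps $\alpha$ controlled along the flow. (b) The two-body exchange terms $\sfX_\alpha(\id-\conj\gamma)-\gamma\,\sfX_\alpha$ inside~\eqref{eq:BdG_alpha}; after rescaling through $\op_\alpha=\lambda\ket{\alpha}\!\!\bra{\alpha}$, these should be identified as the Weyl quantization of the classical drift $\tfrac1N\nabla K(\chi_1-\chi_2)\cdot(\nabla_{\xi_1}-\nabla_{\xi_2})$ acting on $F$, up to a residue of order $\hbar$, and thus cancel precisely with the right-hand side of~\eqref{eq:F_and_error_1}. (c) Propagation of the moments $h^d\Tr{\op\n{\opp}^4}$ and $h^d\Tr{\op\n{x}^4}$ and of $\Nrm{\op}{\L^d}$ along the BdG flow, which on a bounded time interval yields uniform constants.

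The main obstacle is step (b): establishing the semiclassical identification between the exchange commutator in~\eqref{eq:BdG_alpha} and the classical pairwise drift with an error of size $\hbar\,\Wh(F,\op_\alpha)$ rather than the crude $O(1/N)$. Since $1/N\gg\hbar$ here, any naive estimate is useless; one needs a Weyl-calculus expansion of $K$ with quadratic-in-$\hbar$ remainder (made possible by $\nabla^2K\in L^\infty$ and $\widehat K\in L^1$) together with careful bookkeeping of the rank-one tensor structure of $\op_\alpha$ on $\h\otimes\h$. Once (a)--(c) are assembled, they yield a Grönwall inequality of the form $\Phi'(t)\leq C(t)\,\Phi(t)+C\hbar$, valid on the maximal interval $[0,T]$ over which the propagated moments and Schatten norm stay bounded, from which~\eqref{eq:main_2} follows. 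The restriction to finite $T$ reflects precisely the need to keep the $O(1/N)$ correction uniformly sized in the estimate, which prevents the double-exponential closure used in Theorem~\ref{thm:main}.
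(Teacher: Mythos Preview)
Your overall strategy (coupling dynamics, Dobrushin-type Gr\"onwall on the cost functionals) matches the paper, but two concrete points in your plan do not line up with what actually makes the proof work.

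First, step~(b) is simpler than you suggest. Once the BdG system is rewritten as~\eqref{eq:semiclassical_BdG}, the term responsible for the pairwise drift is the explicit two-body potential $\tfrac{1}{N}K_{12}$ inside the commutator in~\eqref{eq:semiclassical_alpha}, not a residual exchange operator requiring a Weyl expansion. When you differentiate the two-body cost, the quantum contribution $\tfrac{1}{i\hbar N}\com{\opC,K_{12}}$ and the classical drift $-\tfrac{1}{N}\nabla K(\chi_1-\chi_2)\cdot\nabla_{\xi_{12}}\opC$ combine, after the identity $\tfrac{1}{i\hbar}\com{\opc\otimes\id,K_{12}}=(\xi_1-\opp_1)\cdot\nabla K_{12}+\nabla K_{12}\cdot(\xi_1-\opp_1)$ and Cauchy--Schwarz, into an integrand controlled by $\n{\nabla K(x_1-x_2)-\nabla K(\chi_1-\chi_2)}^2$. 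The Lipschitz bound $\Nrm{\nabla^2K}{L^\infty}$ then turns this into $\tfrac{C}{N}\cE_{\opUp}$, a harmless contribution to the Gr\"onwall \emph{coefficient}, with no residual $O(1/N)$ source term at all. No semiclassical calculus with quadratic-in-$\hbar$ remainder is needed here.

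Second, and more seriously, your explanation for the finite time $T$ is wrong, and this hides the actual technical obstruction. The finiteness of $T$ has nothing to do with the $1/N$ correction. The real issue is that several terms in $\tfrac{\d}{\d t}\cE_{\opUp}$ (those coming from $K_{12}\,\op_{12}$ and from $\langle\com{K_{12},\op_{12}}\rangle_{\op_\alpha}$) are bounded by $h^{d-1}\Nrm{\op}{\L^\infty}\,\cE_{\opUp}$. In Theorem~\ref{thm:main} one uses $\Nrm{\op}{\L^\infty}\le (Nh^d)^{-1}$ so that $h^{d-1}\Nrm{\op}{\L^\infty}\le (Nh)^{-1}\le C$; this fails precisely when $Nh\to 0$. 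The paper instead propagates $\Nrm{\op}{\L^p}$ and $\theta_\alpha$ simultaneously via the coupled differential inequalities of Proposition~\ref{prop:schatten_propag}\,\ref{part:N2h>1_regime}, obtaining $\Nrm{\op}{\L^p}\le C$ and $\theta_\alpha\le CNh^{2d/p}$ only on a bounded interval $[0,T]$. With $p=d$ this gives $T$ of order one and $h^{d-1}\Nrm{\op}{\L^\infty}\le h^{d/p'-1}\Nrm{\op}{\L^p}\le C$. The hypothesis $\Nrm{\alpha^\init}{L^2}\le CNh$, i.e.\ $\theta_{\alpha^\init}\le CNh^2$, is exactly the initial smallness required to launch this argument with $p=d$, and it also guarantees that the source term $\theta_\alpha/(Nh)$ in the final Gr\"onwall inequality is $O(h)$. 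Your proposal neither identifies the problematic $h^{d-1}\Nrm{\op}{\L^\infty}$ factor nor the local-in-time Schatten propagation that resolves it; without this, the Gr\"onwall closure you describe does not go through in the regime $Nh\to 0$.
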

 
	\begin{remark}
		As an immediate consequence of Theorem \ref{thm:main} and the main result of Marcantoni et al. in \cite[Theorem 3.3]{marcantoni_dynamics_2023}, we establish a global-in-time joint semiclassical and mean-field approximation of the dynamics of a system of spin-$\frac12$ fermions with quasi-free initial data that are close to Slater determinant-like states by solutions of the Vlasov equation. In particular, we establish the convergence in some negative Sobolev space. See Theorem~\ref{thm:joint_limit} in Section~\ref{sec:application}.
	\end{remark}

\subsection{Previously known results}

	As the BdG equation \eqref{eq:BdG} can be seen as a generalization of the Hartree--Fock equation, we briefly review the literature concerning the semiclassical limit from the Hartree--Fock equation to the Vlasov equation. Equation~\eqref{eq:Vlasov} can be seen as the semiclassical approximation of a system of many interacting quantum particles, as pointed out in the pioneering works by Narnhofer and Sewell~\cite{narnhofer_vlasov_1981} and by Spohn~\cite{spohn_vlasov_1981} where the Vlasov equation was obtained directly from the many-body Schr\"odinger equation with smooth interaction in the combined mean-field and semiclassical regime. This has been reconsidered in~\cite{graffi_mean-field_2003} and more recently in~\cite{chong_many-body_2021, chen_combined_2021}, where the case of the Coulomb potential with a $N$ dependent cut-off has been addressed. Moreover, a combined mean-field and semiclassical limit for particles interacting via the Coulomb potential has been treated in~\cite{golse_mean-field_2021} for factorized initial data whose first marginal is given by a monokinetic Wigner measure (that can be seen as the Klimontovich solutions to the Vlasov equation), which leads to the pressureless Euler--Poisson system.

	Most of the above mentioned works rely on compactness methods that do not allow for an explicit bound on the rate of convergence, which is essential for applications. For this reason the Hartree equation~\eqref{eq:Hartree--Fock} has been considered as an intermediate step to decouple the problem into two separate parts, namely to prove the convergence of the mean-field limit from the many-body Schr\"odinger equation towards the Hartree equation, and then the semiclassical limit from the Hartree equation to the Vlasov equation. In this paper, we are interested in the latter problem, that has been largely studied in different settings. It was first proven by Lions and Paul in~\cite{lions_sur_1993}, and later in~\cite{markowich_classical_1993, figalli_semiclassical_2012}, that the Wigner transforms of the solutions of the Hartree equation~\eqref{eq:Hartree--Fock} converge in some weak sense to solutions of the Vlasov--Poisson equation. Quantitative rates of convergence were then obtained, first in the case when the Coulomb potential is replaced by a smoother potential, in Lebesgue-type norms~\cite{athanassoulis_strong_2011-1, athanassoulis_strong_2011, amour_semiclassical_2013, benedikter_hartree_2016} and in a quantum analogue of the Wasserstein distances \cite{golse_schrodinger_2017}. The case of singular interactions was then treated in~\cite{lafleche_propagation_2019, lafleche_global_2021} with the same quantum Wasserstein distances, and in~\cite{saffirio_semiclassical_2019, saffirio_hartree_2020, lafleche_strong_2023} in Lebesgue-type norms. In particular, for $K=\n{x}^{-1}$, the explicit rate has been established in~\cite{lafleche_propagation_2019, iacobelli_enhanced_2024} for the weak topology and in~\cite{saffirio_hartree_2020, lafleche_strong_2023} for the Schatten norms.

	In a different setting, the semiclassical limit has also been studied for local perturbations of stationary states in the case of infinite gases in~\cite{lewin_hartree_2020}.

	The BdG equation is known to offer a self-consistent field description of a system of fermionic particles (See \cite{de_gennes_superconductivity_1999}). The global well-posedness in the energy space of the time-dependent BdG equation in $\R^3$, with potential $U$ including the Coulomb potential and $\hbar$ fixed, can be found in~\cite{benedikter_diracfrenkel_2018}. This result was subsequently improved in~\cite{dong_dynamics_2021} to include positive singular potentials up to and including $U(x)=\n{x}^{-2+\epsilon}$, for $0<\epsilon\le 2$, via techniques from dispersive PDE theory. In fact, well-posedness and finite-time blowup of solutions to the BdG equation in energy space with a pseudo-relativistic kinetic energy were discussed prior in~\cite{hainzl_blowup_2010, lenzmann_minimizers_2010}. For completeness, let us also mention the fact that the well-posedness theory of a related system of coupled equations, also called the time-dependent Hartree--Fock--Bogoliubov equations in the ``spinless bosonic'' setting was first studied locally in time in~\cite{grillakis_beyond_2013, grillakis_pair_2017, grillakis_uniform_2019} for the pure state case and improved to global-in-time results along with obtaining global-in-time dispersive estimates in~\cite{chong_global_2021, chong_dynamical_2022,huang_global_2022}. The equations were also studied in the mixed state case in~\cite{bach_time-dependent_2022}.
	
	It is also worth mentioning that the Hartree--Fock--Bogoliubov equations were recently obtained in \cite{marcantoni_dynamics_2023} as the mean-field approximation of a system of $N$ interacting fermions with initial state close to quasi-free states with non-zero pairing operator. For the associated equilibrium problem, namely the study of the Hartree--Fock--Bogoliubov functional and its connection to BCS theory of superconductivity and superfluidity, we refer to the review papers \cite{bach_generalized_1994, hainzl_bardeencooperschrieffer_2016} and references therein.

\subsection{Plan of the paper}

	The rest of the paper is organized as follows. In Section~\ref{sec:strategy} we present the outline of the proof, give a useful equivalent formulation of the BdG equation and present some preliminary estimates. Section~\ref{sec:proof} is devoted to the proof of the main results, while Section~\ref{sec:application} provides an application of Theorem~\ref{thm:main} in the setting of the work \cite{marcantoni_dynamics_2023} about mean-field theory for interacting fermionic systems with non-zero pairing.

\section{The strategy: semiclassical Bogoliubov--de Gennes equation}\label{sec:strategy}

	In this section, we present the strategy of the proof, that relies on an ad hoc rewriting of the BdG equation~\eqref{eq:BdG} in the form \eqref{eq:semiclassical_BdG}, representing the main novelty of our approach.

	We first recall that the case of the zero pairing relies on the correspondence principle. It is indeed easy to see that the Vlasov equation~\eqref{eq:Vlasov} can be written in terms of Poisson brackets as $\partial_t f=\{H_f,\,f\}$, where $H_f=\frac{|\xi|^2}{2}+V_f$ and the braces denote the Poisson brackets. Then, the correspondence principle of quantum mechanics, together with the observation that the exchange term \eqref{eq:exchange_operator} vanishes as $h\to 0$, implies that the Hartree--Fock evolution \eqref{eq:Hartree--Fock} converges to the Vlasov dynamics \eqref{eq:Vlasov} for $h$ small.	
	
	In the case of non-zero pairing, the semiclassical approximation of the BdG equation is less clear. The main difficulty comes from the fact that the correspondence principle of quantum mechanics is not immediately applicable to the pairing operator. Our strategy consists in recasting the problem for $\gamma$ and $\alpha$ in terms of the positive self-adjoint density operators $\op$ and $\op_\alpha$ and consider their time evolution.

	\subsection{Scaling the BdG equation}\label{sec:rescaling} 
	To study the semiclassical limit of the pairing operator, we start by noticing from Conditions~\eqref{eq:eq_alpha_0} that it follows
	\begin{equation}\label{eq:theta_expression}
		\theta_\alpha := \frac{1}{N} \Tr{\n{\alpha}^2} \le 1 - Nh^d\Nrm{\op}{\L^2}^2 \in [0,1)\,.
	\end{equation} To better understand $\alpha$, it is more natural to consider its integral kernel and view the kernel as a two-particle wave function. Hence, assuming $\alpha\neq 0$, we define the normalized pairing wave function as
	\begin{equation*}
		\Psi_\alpha(x_1,x_2) := \frac{1}{\Nrm{\alpha}{L^2}} \,\alpha(x_1,x_2)\, .
	\end{equation*}
	Following our scaling convention of $\gamma$ and Identity \eqref{eq:eq_alpha_0}, it is suggestive to consider the rescaled projection operator acting on $\h\otimes\h$ and its normalization 
	\begin{align}
		\sfA_{\alpha}:=\frac{1}{N\,h^{2d}}\ket{\alpha}\!\!\bra{\alpha}\quad \text{ and } \quad \op_\alpha := h^{-2d} \ket{\Psi_\alpha}\!\!\bra{\Psi_\alpha}.
	\end{align}
	Clearly, $\sfA_\alpha = \theta_\alpha \op_\alpha$ and $\op_\alpha$ satisfies the normalization $h^{2d} \Tr{\op_\alpha} = 1$. 
	
	To make connection with classical phase space dynamics, we need to recast the BdG equation in terms of the rescaled operators $\op$ and $\op_{\alpha}$. Define $\sfX_{\alpha}$ by Expression \eqref{eq:exchange_operator} and notice that $\sfX_\alpha\,\alpha^* = Nh^{2d}\theta_\alpha\tr_2\!\(K_{12}\op_\alpha\)$, which then implies 
	\begin{equation}\label{eq:skew_symmetry_of_Xalpha_alpha}
		\sfX_\alpha\,\alpha^* - \alpha\,\sfX_\alpha^* = Nh^{2d}\theta_\alpha\tr_2\!\(\com{K_{12},\op_\alpha}\) =: Nh^d\theta_\alpha \com{K_{12},\op_\alpha}_{:1}.
	\end{equation} 
	Here, $\tr_2(\cdot)$ denotes the partial trace with respect to the second Hilbert space and $K_{12}$ denotes the operator of multiplication by $K(x_1-x_2)$ on $\h\otimes\h$. 
	Hence, we see that Equation~\eqref{eq:BdG_gamma} has the form 
	\begin{equation*}
		i\hbar\,\dpt \op = \com{\sfH_{\op},\op} + \theta_\alpha\com{\tfrac{1}{N}K_{12},\op_\alpha}_{:1}
	\end{equation*}
	where $\sfH_{\op}$ is as defined in Equation \eqref{eq:Hartree--Fock}.
	
	To rewrite Equation \eqref{eq:spinless_BdG_alpha}, we view $\alpha = \alpha(x_1,x_2)$ as a two-body wave function as opposed to it being a Hilbert--Schmidt operator. Then the equation has the form 
	\begin{equation}
		\begin{aligned}
			i\hbar\,\dpt \alpha(x_1, x_2) &= \(-\tfrac{\hbar^2}{2}\lapl_{x_1} - \tfrac{\hbar^2}{2}\lapl_{x_2} + \tfrac{1}{N}K(x_1-x_2)\)\alpha(x_1, x_2)
			\\
			&\quad + h^d\intd \(K(x_1-y)+K(y-x_2)\)\op(y, y)\,\alpha(x_1, x_2)\d y
			\\
			&\quad - h^d\intd \(K(x_1-y)+K(y-x_2)\)\op(x_1, y)\,\alpha(y, x_2)\d y
			\\
			&\quad - h^d\intd \(K(x_1-y)+K(y-x_2)\)\op(x_2, y)\,\alpha(x_1, y)\d y
		\end{aligned}
	\end{equation}
	or more compactly
	\begin{equation}\label{eq:rescaled_BdG_alpha}
		i\hbar\, \dpt \alpha = \(\sfH_{12}+\tfrac{1}{N}\,K_{12} \)\alpha - h^d\,\op_{12}\, K_{12}\,\alpha
	\end{equation}
	where $\op_{12} := \op\otimes \id + \id\otimes \op$ and $\sfH_{12} := \sfH_{\op}\otimes \id + \id\otimes \sfH_{\op}$. 
	
	To make connection with classical mechanics, it is better to consider the two-particle operator $\sfA_{\alpha}$ instead of $\alpha$. Using Equation \eqref{eq:rescaled_BdG_alpha}, it follows that $\sfA_\alpha$ satisfies
	\begin{equation*}
		i\hbar\,\dpt \sfA_\alpha = \com{\sfH_{12} + \tfrac{1}{N}K_{12}\(\id-Nh^d\op_{12}\), \sfA_\alpha} - h^d \com{K_{12}, \op_{12}}\sfA_\alpha\,.
	\end{equation*}
	Since $\theta_\alpha = h^{2d}\Tr{\sfA_\alpha}$, taking the trace of the above equation yields
	\begin{equation}\label{eq:dt_theta}
		i\hbar\,\dt\theta_\alpha 
		= -h^d \bangle{\com{K_{12},\op_{12}}}_{\sfA_\alpha}
	\end{equation}
	where $\bangle{B}_A := h^{2d}\Tr{AB}$ if $A$ and $B$ are operators acting on $L^2(\Rdd)$. Finally, summarizing the above discussion and using the fact that $\sfA_\alpha=\theta_\alpha\op_\alpha$, we obtain the equations%
	\begin{subequations}\label{eq:semiclassical_BdG}%
		\begin{align}\label{eq:semiclassical_gamma}
			i\hbar\,\dpt \op &= \com{\sfH_{\op},\op} + \frac{\theta_\alpha}{N} \com{K_{12},\op_\alpha}_{:1} 
			\\\label{eq:semiclassical_alpha}
			i\hbar\,\dpt \op_\alpha &= \com{\sfH_{12} + \tfrac{1}{N}K_{12}\(\id-Nh^d\op_{12}\), \op_\alpha} 
			\\
			&\quad + h^d \(\com{K_{12}, \op_{12}} - \bangle{\com{K_{12}, \op_{12}}}_{\op_{\alpha}}\)\op_\alpha\,. \notag
		\end{align}
	\end{subequations}
	In particular, the trace of $\op$ and $\op_\alpha$ are conserved.
	
	Now, by the correspondence principle, one can expect, at least in the case when $K$ is a sufficiently regular potential, that $F(z_{12}) = F(z_1,z_2) := F_{\op_\alpha}(z_1,z_2)$ defined on $\Rdd\times\Rdd$ solves in the limit $\hbar\to 0$ and $N\to\infty$
	\begin{equation}
		\dpt F + \xi_{12}\cdot\nabla_{\chi_{12}} F + E_{12}\cdot\nabla_{\xi_{12}} F = 0
	\end{equation}
	where $\xi_{12}=(\xi_1, \xi_2)$ and $E_{12} := \(E_f(\chi_1), E_f(\chi_2)\)$ with 
	\begin{align*}
		E_f(\chi):= -\grad V_{f} \quad \text{ and } \quad V_f(\chi)= \iintd K(\chi-\chi_1)f(z_1)\d z_1\,.
	\end{align*}
	In fact, if $Nh^d \ll 1$, then, to the order $1/N$, we expect to have
	\begin{align*}
		i\hbar\,\dpt \op &= \com{\sfH_{\op},\op} + \tfrac{\theta_\alpha}{N}\com{K_{12},\op_\alpha}_{:1}
		\\
		i\hbar\,\dpt \op_\alpha &= \com{\sfH_{12} + \tfrac{1}{N}K_{12}, \op_\alpha}
	\end{align*}
	as the leading order dynamics, that is formally, when $\hbar \to 0$ we have that
	\begin{subequations}\label{eq:Vlasov_and_error_2}
		\begin{align}
			&\dpt f + \xi\cdot\grad_{\chi} f + E_f\cdot\Dv f = \frac{\theta_\alpha}{N} \intgam 
			\nabla K(\chi-\chi_2) \cdot \Dv F(z,z_2) \d z_2\,,
			\\
			&\dpt F + \xi_{12}\cdot\nabla_{\chi_{12}} F + E_{12}\cdot\nabla_{\xi_{12}} F = \frac{1}{N} \nabla K(\chi_1-\chi_2)\cdot \(\nabla_{\xi_1}-\nabla_{\xi_2}\)F\,.
		\end{align}
	\end{subequations}

	\begin{remark}
		As already pointed out in Remark~\ref{remark:opt-transp-distance}, estimates in the semiclassical optimal transport distance give accuracy up to $\hbar$ since $\Wh(f,\op)^2 \geq d\hbar$. In particular, the terms on the right hand side of Equations \eqref{eq:Vlasov_and_error_2} with the $1/N$ in front are meaningful only if $h \ll 1/N$, which is allowed from some semiclassical regime but does not include for instance the regime $N\, h^d = 1$.
	\end{remark}
	
	\subsection{Conservation laws and a priori estimates}

	As indicated in the previous discussion, if $(\op, \op_\alpha)$ solves System \eqref{eq:semiclassical_BdG}, then it follows that
	\begin{equation}
		h^d\tr_{\h}\!\(\op\) = 1 \quad \text{ and } \quad h^{2d}\tr_{\h\otimes\h}\!\(\op_{\alpha}\)=1
	\end{equation}
	hold for all $t \ge 0$ provided the equalities hold at initial time. Define the one-particle density operator (first marginal) associated to $\op_\alpha$ by
	\begin{equation}\label{eq:op_alpha:1}
		\op_{\alpha:1} := h^d \tr_2(\op_\alpha) = \frac{1}{Nh^d\theta_\alpha} \n{\alpha^*}^2\ge 0\, .
	\end{equation}
	In light of the semiclassical scaling, the last inequality in Formula~\eqref{eq:eq_alpha_0} gives
	\begin{equation}\label{eq:rescaled_quasifree_id}
		N\,h^d\op^2 + \theta_\alpha\, \op_{\alpha:1} \leq \op\, ,
	\end{equation} 
	which is preserved by the BdG dynamics. As an immediate consequence, we have
	\begin{equation}\label{eq:ineq_alpha_2}
		0\leq \theta_\alpha\,\op_{\alpha:1} \leq \op \leq \frac{1}{Nh^d}\,\id\, .
	\end{equation}
	Moreover, since $\op_\alpha = h^{-2d} \ket{\Psi_\alpha}\!\!\bra{\Psi_\alpha}$ is a rank one operator, it verifies $0 \leq \op_{\alpha} \leq h^{-2d}\id$. The right-hand side inequality is sharp and more generally $\Nrm{\op_{\alpha}}{\L^p} = h^{-2d/p'}$.

	As proved for instance in~\cite{benedikter_diracfrenkel_2018}, the following energy functional is conserved
	\begin{equation}
		\Eps := h^d\Tr{\n{\opp}^2\op} + \frac{1}{2} \intd V_{\op}\, \rho - \frac{h^{2d}}{2} \Tr{\sfX_{\op}\,\op} + \frac{\theta_\alpha}{N}\,h^{2d}\Tr{K_{12}\,\op_\alpha}. 
	\end{equation}
	In particular, if $K\in L^\infty$ and the energy is initially bounded uniformly in $\hbar$, then the kinetic energy of $\op$ is bounded uniformly in $\hbar$ and time and more precisely
	\begin{equation}\label{est:kinetic_energy_bound}
		h^d\Tr{\n{\opp}^2\op} \leq \Eps + \(1 + \tfrac{\theta_\alpha}{N}\) \Nrm{K}{L^\infty} \leq \Eps + 2 \Nrm{K}{L^\infty} =: C_{\Eps,K}\,.
	\end{equation}
	Moments of order $2$ of $\op_{\alpha:1}$ are also bounded uniformly in $\hbar$ and time by the energy since by Formula~\eqref{eq:ineq_alpha_2}
	\begin{equation*}
		\theta_\alpha\,h^d\Tr{\op_{\alpha:1} \n{\opp}^2} \leq h^d\Tr{\op \n{\opp}^2} \leq C_{\Eps,K}\,.
	\end{equation*}
	For the two-particles density operator $\op_{\alpha}$, this can be written $h^{2d}\Tr{\op_\alpha\n{\opp_1}^2} \leq C_{\Eps,K}$ where $\opp_1$ is the momentum operator acting on the first variable. By symmetry, the same is true by replacing $\opp_1$ by $\opp_2$ and so we deduce that
	\begin{equation*}
		\theta_\alpha h^{2d}\Tr{\op_\alpha\(\n{\opp_1}^2+\n{\opp_2}^2\)} \leq C_{\Eps,K}\,.
	\end{equation*}	
	We can propagate higher order moments. In our case, it will be sufficient to propagate order $4$ moments, as shown in the following proposition.
	\begin{prop}\label{prop:semiclassical_propagation_of_moments}
		Let $(\op,\op_\alpha)$ be a solution of the BdG equation~\eqref{eq:semiclassical_BdG} and
		\begin{equation*}
			M_n := h^d\Tr{\op\n{\opp}^n} \quad \text{ and } \quad N_n := h^d\Tr{\op\n{x}^n}
		\end{equation*}
		denote the velocity and position moments  of order $n\in\N$ of the operator $\op$. Then for any $t\geq 0$,
		\begin{align*}
			M_2(t) &\leq C_{\Eps,K} & M_4(t)^{1/2} &\leq M_4(0)^{1/2} + C_K\,t
			\\
			N_2(t)^{1/2} &\leq N_2(0)^{1/2} + C_{\Eps,K}^{1/2}\,t & N_4(t)^{1/4} &\leq N_4(0)^{1/4} + C\,\hbar^3 t + C\(M_4(0)^{1/2} + t\)^\frac{3}{2}
		\end{align*}
		where $C_K = 3 \(\hbar \Nrm{\Delta K}{L^\infty} + 2 \Nrm{\nabla K}{L^\infty} \sqrt{C_{\Eps,K}}\)$ and $C$ only depends on $d$ and $C_K$.
	\end{prop}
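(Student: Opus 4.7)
The plan is to differentiate each moment in time using Equation~\eqref{eq:semiclassical_gamma}, move the commutator onto the observable via cyclicity (so that $\tfrac{1}{i\hbar}\Tr{\com{\sfH_\op,\op}A} = \tfrac{1}{i\hbar}\Tr{\op\com{A,\sfH_\op}}$), and estimate the resulting trace by Cauchy--Schwarz after factoring $\op=\sqrt{\op}\sqrt{\op}$. The pairing contribution $\tfrac{\theta_\alpha}{N}\com{K_{12},\op_\alpha}_{:1}$ is handled in the same fashion; for the position moments it vanishes since $|x_1|^n$ commutes with $K_{12}$, and for the velocity moments it produces only an $O(1/N)$ correction of the same structural form as the main term. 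The bound on $M_2$ is already proven in~\eqref{est:kinetic_energy_bound}.

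For $M_4$, I would use $\com{|\opp|^2,V}/(i\hbar)=-(p\cdot\nabla V+\nabla V\cdot p)$ together with $\com{|\opp|^4,V}=|\opp|^2\com{|\opp|^2,V}+\com{|\opp|^2,V}|\opp|^2$ applied to $V=V_\op$, treating the exchange and pairing terms analogously. Cauchy--Schwarz on $\op=\sqrt{\op}\sqrt{\op}$ then produces bounds of the form $h^d|\Tr{\op|\opp|^2(\nabla V_\op\cdot p)}|\leq M_4^{1/2}\bigl(\Nrm{\nabla K}{L^\infty}\sqrt{d\,M_2}+\hbar\Nrm{\Delta K}{L^\infty}\bigr)$, where I use $\Nrm{\nabla V_\op}{L^\infty}\leq\Nrm{\nabla K}{L^\infty}$ (since $\varrho$ has $L^1$ norm one) together with the uniform bound on $M_2$ from~\eqref{est:kinetic_energy_bound}. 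This yields $\dpt M_4\leq C_K M_4^{1/2}$ with the stated constant $C_K$, and dividing by $2M_4^{1/2}$ and integrating gives the claimed linear bound on $M_4^{1/2}$.

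For $N_2$, only the kinetic part of $\sfH_\op$ contributes non-trivially; the potential $V_\op$ commutes with $|x|^2$ and the exchange is an $O(h^d)$ correction. The identity $\com{|x|^2,|\opp|^2/2}/(i\hbar)=x\cdot p+p\cdot x$ and Cauchy--Schwarz give $|\dpt N_2|\leq 2\,N_2^{1/2}M_2^{1/2}$, hence $\dpt N_2^{1/2}\leq M_2^{1/2}\leq C_{\Eps,K}^{1/2}$, producing the claimed bound.

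For $N_4$, the commutator $\com{|x|^4,|\opp|^2/2}/(i\hbar)=2\{|x|^2,x\cdot p\}_+$ drives the evolution, and the classical analog yields $\dpt\int f|\chi|^4\d z\leq 4\int f|\chi|^3|\xi|\d z\leq 4 N_4^{3/4}M_4^{1/4}$ by H\"older, so $\dpt N_4^{1/4}\leq M_4^{1/4}$; inserting $M_4(s)^{1/2}\leq M_4(0)^{1/2}+C_Ks$ and integrating in $s$ gives a term of the form $C(M_4(0)^{1/2}+t)^{3/2}$, which matches the proposition. The main difficulty is the quantum counterpart of this interpolation: a direct Cauchy--Schwarz factorization of $\Tr{\op|x|^2\,x\cdot p}$ tends to produce the undesired higher moment $N_6^{1/2}M_2^{1/2}$ rather than $N_4^{3/4}M_4^{1/4}$. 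One must either appeal to an operator H\"older-type inequality or go through a positive phase-space distribution such as the Husimi transform $\tilde f_\op=g_\hbar*f_\op$, to which classical H\"older applies pointwise, and then control the discrepancy between $\Tr{\op A}$ and $\int W_A\tilde f_\op\,\d z$ via the Moyal expansion; for the polynomial symbols $A=|\chi|^4$ and $A=|\chi|^2\chi\cdot\xi$ this expansion terminates at finite order and leaves a remainder of order $\hbar^3$, producing the additive $C\hbar^3 t$ correction after time integration. This quantum-to-classical interpolation for $N_4$ is the principal obstacle; the other three moment bounds follow from more standard commutator manipulations and Cauchy--Schwarz.
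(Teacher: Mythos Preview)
Your strategy coincides with the paper's for $M_2$, $M_4$, and $N_2$. Two refinements are worth noting. First, the paper groups the Hartree, exchange, and pairing contributions into a single self-adjoint two-particle operator $\opmu = \op^{\otimes 2}(\id-\sfX_{12}) + \tfrac{\theta_\alpha}{N}\,\op_\alpha$, so that $i\hbar\,\dpt\op = \tfrac12\com{\n{\opp}^2,\op} + \com{K_{12},\opmu}_{:1}$; this makes the $M_4$ estimate uniform over the three pieces (whence the factor $3$ in $C_K$) and, since $\n{x_1}^n$ commutes with $K_{12}$, shows that for the position moments \emph{all} interaction contributions---including the exchange, not just the pairing---vanish identically rather than being merely $O(h^d)$. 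To control the pairing piece of $\opmu$ the paper uses the operator inequality $\theta_\alpha\,\op_{\alpha:1}\leq\op$ from~\eqref{eq:ineq_alpha_2}, which bounds the fourth velocity moment of $\op_\alpha$ by $M_4/\theta_\alpha$; you do not mention this ingredient, but it is what makes the pairing term contribute the same $M_4^{1/2}$ as the others rather than an unrelated moment of $\op_\alpha$. Second, for $N_4$ the paper bypasses the Husimi detour entirely and invokes \cite[Lemma~3.2]{lafleche_global_2021}, which gives the operator-H\"older bound $\dt N_4 \leq C\bigl(M_4^{1/4}N_4^{3/4} + \hbar\,N_2\bigr)$ directly. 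The $\hbar^3$ in the statement is then not a Moyal remainder of that order as you suggest, but the outcome of the Young splitting $\hbar\,N_2 \leq \hbar\,N_4^{1/2} \leq \hbar^3 + N_4^{3/4}$, which absorbs the quantum correction into the Gr\"onwall term at the price of an additive $\hbar^3$. Your Husimi route is viable and would produce a correction of the same $O(\hbar\,N_2)$ type, but the specific power $3$ on $\hbar$ comes from this algebraic splitting, not from the order of the semiclassical expansion.
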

	
	\begin{proof}
		To simplify the computations, we write the evolution equation for $\op$ given by Equation~\eqref{eq:semiclassical_gamma} in the form
		\begin{equation*}
			i\hbar \,\dpt \op = \frac{1}{2}\com{\n{\opp}^2,\op} + \com{K_{12},\opmu}_{:1} \quad \text{ with } \quad \opmu = \op^{\otimes 2}\(\id-\sfX_{12}\) + \tfrac{\theta_\alpha}{N}\op_\alpha
		\end{equation*}
		where $\sfX_{12}$ is the operator that exchanges the first and the second coordinate, that is $\sfX_{12} \varphi(x_1,x_2) = \varphi(x_2,x_1)$. Observe that $\opmu$ is self-adjoint. Then it follows from the cyclicity of the trace that
		\begin{equation*}
			i\hbar\,\dt M_4 = h^{2d} \tr_{\h^{\otimes 2}}\!\(\com{K_{12},\opmu}\n{\opp_1}^4\) = h^{2d} \tr_{\h^{\otimes 2}}\!\(\opmu \com{\n{\opp_1}^4,K_{12}}\).
		\end{equation*}
		By the Leibniz formula for commutators, $\com{A^2,B} = A \com{A,B} + \com{A,B}A$, and since $\opmu$ is self-adjoint, it gives
		\begin{equation*}
			\hbar \,\dt M_4 = 2\,h^{2d} \im \tr_{\h^{\otimes 2}}\!\(\opmu \n{\opp_1}^2 \com{\n{\opp_1}^2,K_{12}}\).
		\end{equation*}
		Since $\com{\opp_1,K_{12}} = -i\hbar\,\nabla K_{12}$, where $\nabla K_{12}$ denotes the operator of multiplication by $\nabla K(x_1-x_2)$, it follows that
		\begin{equation*}
			\com{\n{\opp_1}^2,K_{12}} = -i\hbar\(\opp_1\cdot \nabla K_{12} + \nabla K_{12}\cdot \opp_1\) = -\hbar^2\,\Delta K_{12}  -2\,i\hbar\,\nabla K_{12}\cdot \opp_1
		\end{equation*}
		and so it follows from the cyclicity and H\"older's inequality for the trace that
		\begin{equation}\label{eq:dt_M4}
			\dt M_4 \leq 2\, \hbar \Nrm{\Delta K}{L^\infty} \Nrm{\opmu \n{\opp_1}^2}{\L^1(\h^{\otimes 2})} + 4 \Nrm{\nabla K}{L^\infty} \Nrm{\opp_1\,\opmu \n{\opp_1}^2}{\L^1(\h^{\otimes 2})}
		\end{equation}
		Now we decompose $\opmu$ into the three terms that define it and use the triangle inequality for the trace norm. Notice indeed that for the first term, we get
		\begin{align*}
			&\Nrm{\op^{\otimes 2} \n{\opp_1}^2}{\L^1(\h^{\otimes 2})} = \Nrm{\op \n{\opp}^2}{\L^1} \leq \Nrm{\sqrt{\op}}{\L^2} \Nrm{\sqrt{\op} \n{\opp}^2}{\L^2} = M_4^{1/2}
			\\
			&\Nrm{\op^{\otimes 2}\,\sfX_{12} \n{\opp_1}^2}{\L^1(\h^{\otimes 2})} = \Nrm{\op^{\otimes 2} \n{\opp_2}^2}{\L^1(\h^{\otimes 2})} = \Nrm{\op \n{\opp}^2}{\L^1} \leq M_4^{1/2}
			\\
			&\Nrm{\op_\alpha \n{\opp_1}^2}{\L^1(\h^{\otimes 2})} \leq \Nrm{\sqrt{\op_\alpha}}{\L^2} \Nrm{\sqrt{\op_\alpha} \n{\opp_1}^2}{\L^2} = h^d\big(\!\tr_{\h^{\otimes 2}}\!\big(\op_\alpha\n{\opp_1}^4\big)\big)^\frac{1}{2} \leq \theta_\alpha^{-\frac{1}{2}}\,M_4^{1/2}
		\end{align*}
		where the last inequality follows from Inequality~\eqref{eq:ineq_alpha_2}. Similarly, for the second term of the right-hand side of Inequality~\eqref{eq:dt_M4} use the fact that for $\opnu = \op^{\otimes 2}$, $\opnu = \op^{\otimes 2}\sfX_{12}$ or $\opnu = \theta_\alpha\,\op_\alpha$
		\begin{equation*}
			\Nrm{\opp_1\,\opnu \n{\opp_1}^2}{\L^1(\h^{\otimes 2})} \leq \Nrm{\opp_1\,\opnu}{\L^2(\h^{\otimes 2})} \Nrm{\opnu \n{\opp_1}^2}{\L^2(\h^{\otimes 2})} \leq M_2^{1/2}\,M_4^{1/2}
		\end{equation*}
		and this gives finally, since $N\geq 1$, $\theta_\alpha\leq 1$ and $M_2 \leq C_{\Eps,K}$,
		\begin{equation*}
			\dt M_4 \leq 6 \(\hbar \Nrm{\Delta K}{L^\infty} + 2 \Nrm{\nabla K}{L^\infty} \sqrt{C_{\Eps,K}}\) \sqrt{M_4}
		\end{equation*}
		from which the result follows by Gr\"onwall's Lemma.
		
		The propagation of position moments follows just by writing for $n=2$ or $n=4$
		\begin{equation*}
			i\hbar\,\dt N_n = \frac{1}{2} \Tr{\com{\n{\opp}^2,\op} \n{x}^n} = \frac{1}{2} \Tr{\op \com{\n{x}^n,\n{\opp}^2}}.
		\end{equation*}
		Therefore, since $\frac{1}{i\hbar} \com{\n{x}^n,\n{\opp}^2} = 2\(x\cdot\opp + \opp\cdot x\)$, it follows from the H\"older's inequality for Schatten norms that
		\begin{equation*}
			\dt N_2 = 2 \re\Tr{\op \,x\cdot\opp} \leq 2 \,M_2^{1/2} N_2^{1/2}
		\end{equation*}
		which yields the inequality for $N_2$ by Gr\"onwall's Lemma. On the other hand, it follows from \cite[Lemma~3.2]{lafleche_global_2021} that
		\begin{equation*}
			\dt N_4 = 2 \re\Tr{\op \n{x}^2 \(x\cdot\opp+\opp\cdot x\)} \leq C\(M_4^{1/4}\,N_4^{3/4} + \hbar\,N_2\).
		\end{equation*}
		By H\"older's inequality for Schatten norm, the fact that $N_0=1$ and Young's inequality for the product, $\hbar\,N_2 \leq \hbar\,N_4^{1/2} \leq \hbar^3 + N_4^{3/4}$, it gives a differential inequality for $y(t)=N_4(t)$ of the form $y' \leq C\(M_4^{1/4}+ \hbar^3\)y^{3/4}$, which again leads to the result by Gr\"onwall's Lemma.
	\end{proof}

	In the remainder of the section, we obtain uniform-in-$\hbar$ estimate for the semiclassical Schatten norms for $\op$ along the BdG dynamics in the case of bounded potential $K$ for different semiclassical scaling regimes.
	\begin{prop}\label{prop:schatten_propag}
		Let $\widehat{K}\in L^1$. Suppose $\op = \op(t)$ is a solution to Equation \eqref{eq:semiclassical_gamma} with $\op(0)=\op^\init \in \cL^p$. We have the following.
		\begin{enumerate}[(i)]
			\item\label{part:Nh>1_regime} In a regime where $N\hbar \geq C$ holds for some fixed $C>0$, independent of $N$ and $\hbar$, then there exists $C_K>0$, dependent only on $K$, such that we have the estimate 
			\begin{equation*}
				\Nrm{\op}{\L^p} \leq \Nrm{\op^\init}{\L^p} e^{C_K t}.
			\end{equation*}
			\item\label{part:N2h>1_regime} If $\theta_\alpha^\init \leq C \,N\,h^{2d/p}$ for some constant $C$ independent of $\hbar$, then there exists $C>0$ independent of $\hbar$ such that for any $t\in[0,T]$ with $T = C\,h^{1-d/p}$,
			\begin{equation}\label{eq:schatten_propag_2}
				\Nrm{\op}{\L^p} \leq C \quad \text{ and }\quad \theta_\alpha \leq C \,N\,h^{2d/p}.
			\end{equation}
		\end{enumerate}
	\end{prop}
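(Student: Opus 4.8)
The plan is to differentiate $\Nrm{\op}{\L^p}^p = h^d\Tr{\op^p}$ in time using Equation~\eqref{eq:semiclassical_gamma} and control the resulting terms. Writing the equation as $i\hbar\,\dpt\op = \com{\sfH_\op,\op} + \tfrac{\theta_\alpha}{N}\com{K_{12},\op_\alpha}_{:1}$, the commutator $\com{\sfH_\op,\op}$ contributes nothing to $\dt\Tr{\op^p}$ by cyclicity of the trace (this is the usual fact that unitary conjugation preserves all Schatten norms), so only the pairing correction term survives:
\begin{equation*}
	\hbar\,\dt \Nrm{\op}{\L^p}^p = p\,h^{d}\,\tfrac{\theta_\alpha}{N}\,\im\tr_\h\!\big(\op^{p-1}\com{K_{12},\op_\alpha}_{:1}\big).
\end{equation*}
The first step is to bound this by H\"older's inequality for Schatten norms: $\big|\tr_\h(\op^{p-1}\com{K_{12},\op_\alpha}_{:1})\big| \le \Nrm{\op^{p-1}}{\L^{p'}}\Nrm{\com{K_{12},\op_\alpha}_{:1}}{\L^p}$ in the appropriate semiclassical normalization, where $\Nrm{\op^{p-1}}{\L^{p'}}$ is essentially $\Nrm{\op}{\L^p}^{p-1}$. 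This reduces everything to estimating the trace norm of the partial-traced commutator $\com{K_{12},\op_\alpha}_{:1}$, which by $\widehat K\in L^1$ and the boundedness $\Nrm{K_{12}}{L^\infty}\le\Nrm{\widehat K}{L^1}$ can be bounded by a constant $C_K$ times the appropriate norm of $\op_\alpha$; using $h^{2d}\Tr{\op_\alpha}=1$ and $0\le\op_\alpha\le h^{-2d}\id$ one gets $\tfrac{1}{N}\Nrm{\com{K_{12},\op_\alpha}_{:1}}{\L^1}\le C_K\,h^{-d}$ or similar. The upshot of part~\eqref{part:Nh>1_regime} is a differential inequality $\dt\Nrm{\op}{\L^p}^p \le \tfrac{C_K}{N\hbar}\,\theta_\alpha\,\Nrm{\op}{\L^p}^{p-1}\cdot(\text{const})$; here one invokes $N\hbar\ge C$ to kill the $1/(N\hbar)$ factor, $\theta_\alpha\le 1$ from~\eqref{eq:theta_expression}, and possibly the key a priori bound $\theta_\alpha\op_{\alpha:1}\le\op$ from~\eqref{eq:ineq_alpha_2} to absorb $\op_\alpha$-terms into $\op$-terms, producing $\dt\Nrm{\op}{\L^p}\le C_K\Nrm{\op}{\L^p}$ and then the exponential bound by Gr\"onwall.

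For part~\eqref{part:N2h>1_regime}, the mechanism is different: there is no $N\hbar\ge C$ to spare, so instead one runs a coupled Gr\"onwall argument on the pair $(\Nrm{\op}{\L^p},\theta_\alpha)$. From Equation~\eqref{eq:dt_theta}, $\hbar\,\dt\theta_\alpha = -h^d\bangle{\com{K_{12},\op_{12}}}_{\sfA_\alpha}$, and using $\sfA_\alpha=\theta_\alpha\op_\alpha$ together with $\op_{12}=\op\otimes\id+\id\otimes\op$ one bounds $|\dt\theta_\alpha|\le \tfrac{C_K}{\hbar}\,\theta_\alpha\,h^d\,(\text{norm of }\op)$; combined with the Schatten estimate, which now reads $\dt\Nrm{\op}{\L^p}^p \lesssim \tfrac{\theta_\alpha}{N\hbar}\,h^{-d}\,\Nrm{\op}{\L^p}^{p-1}$, the hypothesis $\theta_\alpha^\init\le C\,N\,h^{2d/p}$ is exactly what makes the forcing term of size $\tfrac{\theta_\alpha}{N\hbar}h^{-d}\sim \tfrac{h^{2d/p-d}}{\hbar}$, so that over a time interval of length $T=C\,h^{1-d/p}$ the total increment in $\Nrm{\op}{\L^p}$ is $O(1)$. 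One sets up a continuity/bootstrap argument: assume $\Nrm{\op}{\L^p}\le 2\Nrm{\op^\init}{\L^p}$ and $\theta_\alpha\le 2C\,N\,h^{2d/p}$ on a maximal subinterval, plug these into the two differential inequalities, and check that the resulting growth does not reach the barriers before time $T$. This closes the bootstrap and gives~\eqref{eq:schatten_propag_2}.

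The main obstacle, I expect, is getting the dimensional bookkeeping of the semiclassical factors right in the estimate of $\Nrm{\com{K_{12},\op_\alpha}_{:1}}{\L^p}$ — in particular, tracking how the partial trace $\tr_2$, the rank-one structure of $\op_\alpha$, and the normalization $h^{2d}\Tr{\op_\alpha}=1$ interact, and whether one should bound $\op_\alpha$ via its $\L^1$ norm (giving $h^{-d}$) or exploit the finer bound~\eqref{eq:rescaled_quasifree_id} relating $\op_\alpha$ to $\op$. A secondary subtlety is the endpoint/non-integer $p$ issue in differentiating $\Tr{\op^p}$ and in the Leibniz-type identity $\dt\Tr{\op^p}=p\,\Tr{\op^{p-1}\dpt\op}$ when $\op$ is only a positive Schatten-class operator; this is standard but needs the operator to stay in $\L^p$, which is precisely what is being proved, so one should phrase the argument as an a priori estimate valid on any interval where $\op\in\L^p$ and then combine with the local well-posedness. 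The use of $\widehat K\in L^1$ is essentially only to ensure $K_{12}$ (and $\nabla K_{12}$, for the commutator) act boundedly, which keeps all the commutator traces finite.
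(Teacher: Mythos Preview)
Your overall strategy is correct and matches the paper's: differentiate $\Nrm{\op}{\L^p}^p$, drop the commutator with $\sfH_\op$ by cyclicity, and for part~(i) close a Gr\"onwall via the a priori bound $\theta_\alpha\,\op_{\alpha:1}\le\op$; for part~(ii) couple this with the evolution~\eqref{eq:dt_theta} of $\theta_\alpha$. But the step you flag as the main obstacle is indeed where your sketch is genuinely incomplete, and the fix is not just dimensional bookkeeping.

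The issue is the $\L^p$ bound on $\com{K_{12},\op_\alpha}_{:1}$. Your proposal ``$\Nrm{K_{12}}{L^\infty}\le\Nrm{\widehat K}{L^1}$, hence bounded by $C_K$ times the appropriate norm of $\op_\alpha$'' does not close: boundedness of $K_{12}$ gives operator norm control, but taking a partial trace of $K_{12}\op_\alpha$ and then a Schatten-$p$ norm is not controlled this way. The paper's move is to undo the rescaling via Identity~\eqref{eq:skew_symmetry_of_Xalpha_alpha}, writing $\theta_\alpha\com{K_{12},\op_\alpha}_{:1}=\frac{1}{Nh^d}(\sfX_\alpha\alpha^*-\alpha\sfX_\alpha^*)$, and then bounding $\Nrm{\sfX_\alpha}{\L^{2p}}\le \Nrm{\widehat K}{L^1}\Nrm{\alpha}{\L^{2p}}$ via the Fourier representation $\sfX_\alpha=\int\widehat K(\omega)\,e_\omega\,\alpha\,e_{-\omega}\,d\omega$ (an average of unitary conjugations). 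This is where $\widehat K\in L^1$ is actually used --- it is a Schur multiplier estimate, strictly stronger than $K\in L^\infty$. From there, H\"older and the identity $\Nrm{\alpha}{\L^{2p}}^2=Nh^d\theta_\alpha\Nrm{\op_{\alpha:1}}{\L^p}$ give
\[
	\dt\Nrm{\op}{\L^p}\le \frac{2C_K}{N\hbar}\,\Nrm{\theta_\alpha\,\op_{\alpha:1}}{\L^p}\le \frac{2C_K}{N\hbar}\,\Nrm{\op}{\L^p},
\]
which is exactly the inequality you want for part~(i). Your crude bound $\tfrac{1}{N}\Nrm{\com{K_{12},\op_\alpha}_{:1}}{\L^1}\le C_K h^{-d}$ loses the crucial absorption into $\Nrm{\op}{\L^p}$ and is too weak.

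For part~(ii), your bootstrap would work once the differential inequalities are set up correctly, but the paper's route is slicker: after the Schatten embedding $\Nrm{\op_{\alpha:1}}{\L^q}\le h^{-d/q}$ one gets the system $u'\le A v$, $v'\le a\,uv$ with $u=\Nrm{\op}{\L^p}$, $v=\theta_\alpha$, $A=\tfrac{C_K}{N h^{1+d/p}}$, $a=C_K h^{d/p-1}$; rescaling time and combining into $W=U^2+V$ gives $W'\le 2W^{3/2}$, solved explicitly to produce the blowup time $T\sim h^{1-d/p}$ and the bound~\eqref{eq:schatten_propag_2}. Note your inequality for $\dt\theta_\alpha$ is missing the factor $\Nrm{\op_{\alpha:1}}{\L^{p'}}$ (coming from the same $\sfX_\alpha\alpha^*$ computation applied to~\eqref{eq:dt_theta}); once you include it and use the embedding, the hypothesis $\theta_\alpha^\init\le CNh^{2d/p}$ is precisely $V(0)=\tfrac{A}{a}\theta_\alpha^\init\le C$, which makes $W(0)$ order one.
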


	\begin{proof}
		By Equation \eqref{eq:semiclassical_gamma}, we have 
		\begin{align}\label{eq:propagation_schatten_p_of_op}
			\frac{1}{p}\,\frac{\d}{\d t} \Nrm{\op}{\L^p}^p &=
			\frac{h^d}{p}\,\frac{\d}{\d t} \Tr{\op^p} = \frac{2}{N^2\,\hbar} \im{\Tr{\op^{p-1}\, \sfX_\alpha\,\alpha^*}}.
		\end{align}
		Therefore, by H\"older's inequality, we obtain the bound
		\begin{equation*}
			\frac{\d}{\d t} \Nrm{\op}{\L^p} \leq \frac{4\pi}{N^2\,h^{d+1}} \Nrm{\sfX_\alpha}{\L^{2p}}\Nrm{\alpha^*}{\L^{2p}}.
		\end{equation*}
		Now it follows from the formula
		\begin{equation*}
			\sfX_\alpha = \intd \widehat{K}(\omega)\, e_\omega \,\alpha\, e_{-\omega} \d \omega
		\end{equation*}
		where $e_\omega$ is the operator of multiplication by the function $e_\omega(x) = e^{-2i\pi\, \omega\cdot x}$ that
		\begin{equation}\label{eq:Xop}
			\Nrm{\sfX_\alpha}{\L^{2p}} \leq C_K \Nrm{\alpha}{\L^{2p}},
		\end{equation}
		where $C_K = \snrm{\widehat K}_{L^1}$. Therefore, by Definition~\eqref{eq:op_alpha:1}, and Inequality~\eqref{eq:ineq_alpha_2}, we get
		\begin{equation}\label{est:derivative_schatten_p_op}
			\dt \Nrm{\op}{\L^p} \leq \frac{2\, C_K}{N\,\hbar} \Nrm{\theta_\alpha\op_{\alpha:1}}{\L^{p}}\le \frac{2\,C_K}{N\,\hbar} \Nrm{\op}{\cL^p}
		\end{equation}
		and Part~\ref{part:Nh>1_regime} follows from Gr\"onwall's lemma.

		To prove Part~\ref{part:N2h>1_regime}, we will also need to study the size of $\theta_\alpha$ along the BdG dynamics. By Equation~\eqref{eq:dt_theta}, cyclicity of the trace, symmetry of $\op_\alpha$ and $K_{12}$, and Equation~\eqref{eq:skew_symmetry_of_Xalpha_alpha}, we have 
		\begin{equation*}
			\dt \theta_\alpha 
			= \frac{8\pi h^{d-1}}{N} \im\Tr{\op\,\sfX_\alpha\,\alpha^*}.
		\end{equation*}
		Again, using H\"older's inequality and Inequality~\eqref{eq:Xop}, we get that 
		\begin{equation}\label{est:derivative_theta_alpha}
			\dt \theta_\alpha \leq \frac{4\, C_K}{N\,\hbar} \Nrm{\op}{\L^p} \Nrm{\alpha}{\L^{2p'}}^2 \leq 8\pi\, C_K\, h^{d-1} \,\theta_\alpha \Nrm{\op}{\L^p} \Nrm{\op_{\alpha:1}}{\L^{p'}}\, .
		\end{equation}
		Applying the Schatten space embedding inequality
		\begin{equation*}
			\Nrm{\op_{\alpha:1}}{\L^p} \leq h^{-d/p} \Nrm{\op_{\alpha:1}}{\L^1} = h^{-d/p},
		\end{equation*}
		to the first inequality in Formula~\eqref{est:derivative_schatten_p_op} and to Inequality~\eqref{est:derivative_theta_alpha} yields the following system of differential inequalities
		\begin{equation*}
			\begin{cases}
				u' \leq A\,v, 
				\\
				v' \leq a\,u\, v
			\end{cases}
		\end{equation*}
		where $u(t)=\Nrm{\op(t)}{\L^p}$ and $v(t)=\theta_\alpha(t)$ with $A = \frac{4\pi\,C_K}{Nh^{1+d/p}}$ large and $a = 8\pi\,C_K\, h^{d/p-1}$ small. Setting $U(t) = u(t/a)$ and $V(t) = \frac{A}{a}\,v(t/a)$, it can be written
		\begin{equation*}
			\begin{cases}
				U' \leq V, 
				\\
				V' \leq  U\,V.
			\end{cases}
		\end{equation*}
		It implies, for instance, that 
		\begin{equation*}
			\(U^2 + V\)' = 3\,U\,V \leq U^3 + 2\,V^{3/2} \leq 2\(U^2 + V\)^{3/2}.
		\end{equation*}
		Hence $\(U(t)^2 + V(t)\)^{-1/2} \geq \(U(0)^2 + V(0)\)^{-1/2} - t$, that is
		\begin{equation*}
			\Nrm{\op}{\L^p}^2 + A\,\theta_\alpha/a \leq \frac{\Nrm{\op^\init}{\L^p}^2 + A\,\theta_\alpha^\init/a}{\(1 - \(\Nrm{\op^\init}{\L^p}^2 + A\,\theta_\alpha^\init/a\)^{1/2} a\,t\)^2}
		\end{equation*}
		and Formula~\eqref{eq:schatten_propag_2} follows from the fact that $\frac{a}{A} = 2\,N\,h^{2d/p}$.
	\end{proof}
	
 \section{Proof of the main results}\label{sec:proof}
 
 	In this section we prove Theorem~\ref{thm:main} and Theorem~\ref{thm:main_2} and so we shall estimate the semiclassical optimal transport pseudo-metric between solutions of the BdG equation \eqref{eq:semiclassical_BdG} and the corresponding proposed classical coupled equations \eqref{eq:easy_equations} or \eqref{eq:Vlasov_F_error_1}.
 
 \subsection{A dynamics for the couplings}
 
	A coupling associated to $F$ and $\op_\alpha$ is a measurable function $\opUp:(z_1, z_2)\mapsto \opUp(z_1,z_2)$ defined for almost all $(z_1, z_2) \in \Rdd\times\Rdd$ with values in the space of bounded linear operators acting on $\h\otimes\h$ such that for almost all $(z_1, z_2) \in \Rdd\times\Rdd$, $\opUp(z_1, z_2) \ge 0$ and
	\begin{equation*}
		h^{2d}\tr_{\h\otimes\h}\!\(\opUp(z_1, z_2)\) = F(z_1, z_2)\quad\text{ and }\quad \intggam \opUp(z_1, z_2)\d z_1\d z_2 = \op_\alpha\,.
	\end{equation*}
	Then the semiclassical optimal transport pseudo-metric between $F$ and $\op_\alpha$ is
	\begin{equation}\label{eq:wasserstein-2}
	\Wh(F, \op_\alpha) :=\(\inf_{\opgam \in \cC(F, \op_\alpha)}\intggam h^{2d}\tr_{\h\otimes\h}\!\(\opC(z_1, z_2)\,\opUp(z_1, z_2)\)\d z_1\d z_2\)^\frac12
	\end{equation} 
	where $\opC(z_1, z_2) :=\opc(z_1)\otimes\id+\id\otimes\opc(z_2)$ that is 
	\begin{multline*}
		\opC(z_1, z_2)\Psi(x_1, x_2)
		= \(\n{\chi_1-x_1}^2+\n{\xi_1-\opp_1}^2\)\Psi(x_1, x_2)\\
		+\(\n{\chi_2-x_2}^2+\n{\xi_2-\opp_2}^2\)\Psi(x_1, x_2)\, .
	\end{multline*}

	For all $\opgam^\init \in \cC(f^\init,\op^\init )$ and $\opUp^\init \in \cC(F^\init, \op_\alpha^\init)$, let $(\opgam, \opUp)$ be the solution to the Cauchy problem
	\begin{subequations}\label{eq:cauchy_problem_for_coupling}
	\begin{equation}
		\dpt \opgam = \{H_f,\opgam\} +\frac{1}{i\hbar}\com{\sfH_{\op},\opgam} + \frac{1}{i\hbar} \frac{\theta_\alpha}{N}\intgam h^d\tr_2\!\(\com{K_{12},\opUp(z_1, z_2)}\)\d z_2
	\end{equation}
	and
	\begin{multline}
		\dpt \opUp = \{H_{f_{12}}, \opUp\} + \tfrac{\eta}{N} \nabla K(\chi_1-\chi_2)\cdot \(\nabla_{\xi_1}-\nabla_{\xi_2}\)\opUp\\
		+ \frac{1}{i\hbar}\com{\sfH_{\op_{12}}+K_{12}\(\tfrac{1}{N}-h^d\op_{12}\), \opUp}
		+ \frac{h^d}{i\hbar} \(\com{K_{12},\op_{12}} -\bangle{\com{K_{12},\op_{12}}}_{\op_{\alpha}}\)\opUp
	\end{multline}
	\end{subequations}
	with $(\opgam(0), \opUp(0)) = (\opgam^\init, \opUp^\init)$ and $\eta \in \set{0,1}$. More precisely, we will set $\eta = 0$ to prove Theorem~\ref{thm:main} and $\eta=1$ to prove Theorem~\ref{thm:main_2}. Notice that, in complete analogy with the well-posedness theory for the system \eqref{eq:Vlasov_and_error_1}--\eqref{eq:F_and_error_1}, one deduces the existence of the coupling dynamics $(\opgam, \opUp)$. It is then not difficult to see that with the above equations, the property of being a coupling is kept along the dynamics.
 
 \subsection{Estimating the semiclassical optimal transport pseudo-metrics}
 
	Let us now define the quantities
	\begin{align}
		\cE_{\opgam}(t) &:= \intgam h^d\tr_1\!\(\opc(z_1)\,\opgam(z_1)\)\d z_1 \label{def:E_gam}
		\\
		\cE_{\opUp}(t) &:= \intggam h^{2d}\tr_{12}\!\(\opC(z_1, z_2)\,\opUp(z_1, z_2)\)\d z_1\d z_2 \label{def:E_up}\, ,
	\end{align}
	where $\tr_1 = \tr_\h$ and $\tr_{12} = \tr_{\h\otimes\h}$. Since $(\opgam, \opUp)$ is a solution to the coupling Cauchy problem~\eqref{eq:cauchy_problem_for_coupling}, one obtains the following equations
	\begin{subequations}\label{eq:dE_gam_identity}
		\begin{align}
			\frac{\d\cE_{\opgam}(t)}{\d t} &= \intgam h^d\tr_1\!\(\{\opc(z_1), H_f\}\,\opgam(z_1)\)\d z_1\\
			&\quad + \frac{1}{i\hbar}\intgam h^d\tr_1\!\( \com{\opc(z_1), \tfrac{1}{2}\n{\opp}^2+ V_{\op}}\opgam(z_1)\)\d z_1\\
			&\quad - \frac{h^d}{i\hbar}\intgam h^d\tr_1\!\( \com{\opc(z_1), \sfX_{\op}}\opgam(z_1)\)\d z_1 \label{def:dE_Fock_term}\\
			&\quad + \frac{1}{i\hbar} \frac{\theta_\alpha}{N}\intggam h^{2d}\tr_{12}\!\(\com{\opc(z_1)\otimes \id, K_{12}}\opUp(z_1, z_2)\)\d z_1\d z_2 \label{def:dE_term3}
		\end{align}
	\end{subequations}
	and, using the fact that 
	\begin{align*}
		\opC(z_1, z_2)\com{K_{12},\op_{12}}-\com{\opC(z_1, z_2),K_{12}\op_{12}}
		 = \opC(z_1, z_2)\,\op_{12}\,K_{12} - K_{12}\,\op_{12}\,\opC(z_1, z_2)
	\end{align*}
	we can write
	\begin{subequations}\small\label{eq:dE_Up_identity}
		\begin{align}\label{def:dE_up_term1}
			\frac{\d\cE_{\opUp}(t)}{\d t} &= \intggam h^{2d}\tr_{12}\!\(\{\opC(z_1, z_2), H_{f_{12}}\}\, \opUp(z_1, z_2)\)\d z_1\d z_2
			\\\label{def:dE_up_term2}
			&\quad + \frac{1}{i\hbar}\intggam h^{2d}\tr_{12}\!\( \com{\opC(z_1, z_2), \sfH_{\op_{12}}}\opUp(z_1, z_2)\)\d z_1\d z_2
			\\\label{def:dE_up_term3}
			&\quad - \frac{\eta}{N}\intggam h^{2d}\tr_{12}\!\(\nabla_{\xi_{12}}\opC(z_1, z_2)\cdot\nabla K_{\chi_{12}} \opUp(z_1, z_2)\)\d z_1\d z_2
			\\\label{def:dE_up_term4}
			&\quad + \frac{1}{i\hbar\, N}\intggam h^{2d}\tr_{12}\!\( \com{\opC(z_1, z_2),K_{12}}\opUp(z_1, z_2)\)\d z_1\d z_2
			\\\label{def:dE_up_term5}
			&\quad - \frac{2\,h^d}{\hbar} \im\intggam h^{2d}\tr_{12}\!\( K_{12}\,\op_{12}\,\opC(z_1, z_2)\,\opUp(z_1, z_2)\)\d z_1\d z_2
			\\\label{def:dE_up_term6}
			&\quad - \frac{h^d}{i\hbar} \bangle{\com{K_{12},\op_{12}}}_{\op_{\alpha}} \intggam h^{2d}\tr_{12}\!\(\opC(z_1, z_2) \opUp(z_1, z_2)\)\d z_1\d z_2\, .
		\end{align}
	\end{subequations}
	
	\subsubsection{Estimates for \texorpdfstring{$\cE_{\opgam}$}{Egam}} To estimate the right-hand side of Identity~\eqref{eq:dE_gam_identity}, let us focus on Term~\eqref{def:dE_Fock_term} since the first two terms are already handled in \cite[Theorem 2.5]{golse_schrodinger_2017}. For Term~\eqref{def:dE_Fock_term}, we notice that 
	\begin{equation*}
		\com{\n{\chi-x}^2, \sfX_{\op}}(x, y) = \tfrac12 \,K(x-y)\((x-y)-2(\chi-y)\)\cdot\(x-y\)\op(x, y)
	\end{equation*}
	which yields the estimate 
	\begin{multline}\label{est:commutator_spatial_exchange-term_Egam}
		\n{\frac{h^d}{i\hbar}\intgam h^d\tr_1\!\( \com{\n{\chi_1-x}^2, \sfX_{\op}}\opgam(z_1)\)\d z_1}
		\\
		\le Ch^{d-1}\Nrm{\n{\cdot} K(\cdot)}{L^\infty}\Nrm{[x, \op]}{\L^\infty}+Ch^{d-1}\Nrm{K}{L^\infty}\Nrm{[x,\op]}{\L^\infty}\cE_{\opgam}\, .
	\end{multline}
	Next, to estimate the part involving $\com{\n{\xi-\opp}^2, \sfX_{\op}}$, recall the identity
	\begin{align*}
		[\n{\opp-\xi}^2, \sfX_{\op}] &= (\opp-\xi)\cdot[\opp, \sfX_{\op}]+[\opp, \sfX_{\op}]\cdot (\opp-\xi)\\
		&= \(\opp-\xi\)\cdot \sfX_{\com{\opp, \op}}+\sfX_{\com{\opp, \op}}\cdot \(\opp-\xi\) ,
	\end{align*}
	and the fact that for any $\op \in \opP(\h)$ and $A, B$ are self-adjoint possibly unbounded operators, we have that 
	\begin{equation*}
		\Tr{\(AB+BA\)\op}\le \Tr{\(A^2+B^2\)\op}.
	\end{equation*}
	Then it follows that 
	\begin{equation}\label{est:commutator_moments_exchange-term_Egam}
		\n{\frac{h^d}{i\hbar}\intgam h^d\tr_1\!\( [\n{\xi_1-\opp}^2, \sfX_{\op}]\opgam(z_1)\)\d z_1} \le\, Ch^{2d-2}\Nrm{ \sfX_{[\opp, \op]}}{\L^\infty}^2+C\,\cE_{\opgam}\, .
	\end{equation}
	By Inequality \eqref{eq:Xop}, H\"older's inequality, and the semiclassical Schatten space embedding inequality, we have that 
	\begin{equation}\label{est:commutator_moments_exchange-term_Egam_2}
		\begin{aligned}
			\mathrm{RHS}\eqref{est:commutator_moments_exchange-term_Egam} &\le C\,h^{2d-2}\,\snrm{\hat K}_{L^1}^2\Nrm{[\opp, \op]}{\L^\infty}^2+C\,\cE_{\opgam}
			\\
			&\le 2C'_K h^{2d-2}\Nrm{\opp\sqrt{\op}}{\L^\infty}^2\Nrm{\op}{\L^\infty}+C\,\cE_{\opgam}
			\\
			&\le 2C'_K h^{3d/2-3}\Nrm{\opp\sqrt{\op}}{\L^4}^2\Nrm{\op}{\L^d}+C\,\cE_{\opgam}
			\\
			&\le 2C'_K h^{(3d-7)/2}\(h^{d}\Tr{\n{\opp}^2\op \n{\opp}^2}\)^\frac12\Nrm{\op}{\L^d}^{3/2}+C\,\cE_{\opgam}\, .
		\end{aligned}
	\end{equation}
	We proceed similarly to estimate $\Nrm{\com{x,\op}}{\L^\infty}$ in Inequality~\eqref{est:commutator_spatial_exchange-term_Egam}, that is we write
	\begin{equation}\label{eq:com_x_op}
		h^{d-1} \Nrm{\com{x,\op}}{\L^\infty} \leq Ch^{(3d-7)/2}\(h^{d}\Tr{\n{x}^2\op \n{x}^2}\)^{1/2} \Nrm{\op}{\L^d}^{3/2}.
	\end{equation}
	Now, combining Inequalities \eqref{est:commutator_spatial_exchange-term_Egam} and \eqref{est:commutator_moments_exchange-term_Egam_2}, we obtain the following bound
	\begin{multline*}
		\n{\eqref{def:dE_Fock_term}} \le C_K\, h^\frac{3d-7}{2}\(M_4+N_4\)^{1/2}
		\Nrm{\op}{\L^d}^{3/2}
		\\
		\qquad + C_K \(1+h^\frac{3d-7}{2} \(M_4+N_4\)^{1/2} \Nrm{\op}{\L^d}^{3/2}\)\cE_{\opgam}
	\end{multline*}
	where $M_4 = \Tr{\op\n{\opp}^4}$ and $N_4 = \Tr{\op\n{x}^4}$.
	
	Next, notice that 
	\begin{align}\label{eq:cost_function-interaction_commutator_identity}
		\tfrac{1}{i\hbar}\com{\opc(z_1)\otimes\id, K_{12}}= (\xi_1-\opp_1)\cdot \grad K_{12}+ \grad K_{12} \cdot (\xi_1-\opp_1)\, ,
	\end{align}
	where $x_2$ is viewed as a constant, then we can rewrite Term~\eqref{def:dE_term3} as follows
	\begin{subequations}
		\begin{align}\notag
			\n{\eqref{def:dE_term3}} &= \frac{2\,\theta_\alpha}{N}\re \intggam \!\!h^{2d}\tr_{12}\!\((\xi_1-\opp_1)\cdot \grad K_{12}\,\opUp(z_1, z_2)\)\d z_1\d z_2
			\\\label{def:dE_new_term1}
			&\le \frac{\theta_\alpha}{N}\intggam h^{2d}\tr_{12}\!\(\n{\xi_1-\opp_1}^2\opUp(z_1, z_2)\)\d z_1\d z_2
			\\\label{def:dE_new_term2}
			&\quad +\frac{\theta_\alpha}{N}\intggam h^{2d}\tr_{12}\!\(\n{\grad K_{12}}^2\opUp(z_1, z_2)\)\d z_1\d z_2.
		\end{align}
	\end{subequations}
	It is clear that~Term \eqref{def:dE_new_term1} is bounded by $\frac{\theta_\alpha}{N} \, \cE_{\opUp}$. For Term~\eqref{def:dE_new_term2}, we use the fact that $\grad K_{12}$ is a bounded multiplication operator of norm $\Nrm{\grad K_{12}}{L^\infty}$, H\"older's inequality, and the fact that $h^{2d}\tr_{12}\!\(\opUp(z_1, z_2)\) = F(z_1,z_2)$ has integral one on $\Rdd\times\Rdd$ to deduce that Term~\eqref{def:dE_new_term2} is bounded by $\Nrm{\grad K}{L^\infty}^2\frac{\theta_\alpha}{N}$. Hence combining our calculations with the result in \cite{golse_schrodinger_2017}, we obtain the bound 
	\begin{multline}\label{est:dE_gam-inequality}
		\frac{\d\cE_{\opgam}}{\d t} \le \(C_K'+C_K h^{(3d-7)/2}(M_4+N_4)^{1/2} \Nrm{\op}{\L^d}^{3/2}\) \cE_{\opgam} + \tfrac{\theta_\alpha}{N}\,\cE_{\opUp}
		\\
		+\tfrac{\theta_\alpha}{N}\Nrm{\grad K}{L^\infty}^2+C_K h^{(3d-7)/2}\(M_4+N_4\)^{1/2}
		 \Nrm{\op}{\L^d}^{3/2},
	\end{multline}
	where $C_K'$ depends on the uniform bound of $\grad^2K$. 
	
	\subsubsection{Estimates for \texorpdfstring{$\cE_{\opUp}$}{Eup}} To estimate the right hand side of Equation \eqref{eq:dE_Up_identity}, we need the following identities
	\begin{align}
		\{\opC(z_1, z_2), H_{f_{12}}\} &= \{\opc(z_1), H_f(z_1)\}\otimes \id +\id\otimes\{\opc(z_2), H_f(z_2)\}
		\\
		\com{\opC(z_1, z_2), \sfH_{\op_{12}}}&= \com{\opc(z_1), \sfH_{\op}}\otimes \id+\id\otimes \com{\opc(z_2), \sfH_{\op}} ,
	\end{align}
	and
	\begin{multline}\label{eq:two-variable_cost_function_commutator_identity}
		\com{\opC(z_1, z_2),K_{12}\op_{12}}= \com{\opc(z_1)\otimes\id, K_{12}}\op_{12}+\com{\id\otimes\opc(z_2), K_{12}}\op_{12}\\
		+K_{12}\(\com{\opc(z_1), \op}\otimes \id+\id\otimes \com{\opc(z_2), \op}\).
	\end{multline}
	For the first term, we see that 
	\begin{align*}
		\n{\eqref{def:dE_up_term1}} &=
		\intggam h^{2d} \tr_{12}\!\(\(\{\opc(z_1), H_f(z_1)\}\otimes\id\) \opUp(z_1, z_2)\) \d z_1\d z_2
		\\
		&\quad + \intggam h^{2d} \tr_{12}\!\(\(\id\otimes\{\opc(z_2), H_f(z_2)\}\) \opUp(z_1, z_2)\) \d z_1\d z_2
		\\
		&= 2\intgam h^d\tr_1\!\(\{\opc(z), H_f(z)\} \opUp_{:1}(z)\)\d z 
	\end{align*}
	where 
	\begin{align*}
		\opUp_{:1}(z_1):= \intgam h^d\tr_2\!\(\opUp(z_1, z_2)\)\d z_2\, . 
	\end{align*}
	Then using the same argument as in \cite{golse_schrodinger_2017}, we obtain the bound 
	\begin{align*}
		\n{\eqref{def:dE_up_term1}} \le&\, 2\(1+\max\(4\Nrm{\grad^2 K}{L^\infty}^2, 1\)\)\intdd h^d\tr_1\!\(\opc(z) \opUp_{:1}(z)\)\d z \\
		&=  2\(1+\max\(4\Nrm{\grad^2 K}{L^\infty}^2, 1\)\)\cE_{\opUp}(t)\, .
	\end{align*}
	A similar argument holds for Term~\eqref{def:dE_up_term2} with minor modification for the term $h^d\sfX_{\op}$ as seen in above estimate for Term~\eqref{def:dE_Fock_term}.
	
	Terms~\eqref{def:dE_up_term3} and \eqref{def:dE_up_term4} follows the same argument as in the case of Term~\eqref{def:dE_term3}, that is, 
	\begin{align*}
		\n{\eqref{def:dE_up_term3}+\eqref{def:dE_up_term4}}
		&\le \frac{1}{N}\intggam h^{2d}\tr_{12}\!\(\n{\xi_1-\opp_1}^2\opUp(z_1, z_2)\)\d z_1\d z_2\\
		&\quad +\frac{1}{N}\intggam \!h^{2d}\tr_{12}\!\(\n{\xi_2-\opp_2}^2\opUp(z_1, z_2)\)\d z_1\d z_2\\
		&\quad +\frac{1}{N}\intggam \!h^{2d}\tr_{12}\!\(\n{\grad K_{12}-\eta\,\grad K_{\chi_{12}}}^2\opUp(z_1, z_2)\)\d z_1\d z_2\, .
	\end{align*}
	If $\eta = 1$, then we use the fact that
	\begin{align*}
		\n{\grad K(x_1-x_2)-\grad K(\chi_1-\chi_2)}^2 \le 2 \Nrm{\grad^2 K}{L^\infty}^2\(\n{x_1-\chi_1}^2+\n{x_2-\chi_2}^2\),
	\end{align*}
	to obtain
	\begin{equation*}
		\n{\eqref{def:dE_up_term3}+\eqref{def:dE_up_term4}} \le \frac{2\max\(\Nrm{\grad^2 K}{L^\infty}^2, 1\)}{N}\,\cE_{\opUp}(t)\, .
	\end{equation*}
	If $\eta = 0$, then we use instead the fact that $\nabla K_{12}$ is a bounded multiplication operator to get
	\begin{equation*}
		\n{\eqref{def:dE_up_term3}+\eqref{def:dE_up_term4}} \le \frac{1}{N}\(\cE_{\opUp}(t) + \Nrm{\nabla K}{L^\infty}\).
	\end{equation*}
	
	For Term \eqref{def:dE_up_term6}, we notice that 
	\begin{equation*}
		\n{\bangle{\com{K_{12},\op_{12}}}_{\op_{\alpha}}} = \n{h^{2d}\tr_{12}\(\com{K_{12},\op_{12}}\op_\alpha\)} \le C\Nrm{K}{L^\infty}\Nrm{\op}{\L^\infty},
	\end{equation*}
	then this yields the bound
	\begin{equation*}
		\n{\eqref{def:dE_up_term6}}\le Ch^{d-1}\Nrm{K}{L^\infty}\Nrm{\op}{\L^\infty}\cE_{\opUp}(t)\,.
	\end{equation*}
	
	Finally, to handle Term \eqref{def:dE_up_term5}, we start by expanding the expression 
	\begin{equation*}
		K_{12}\,\op_{12}\,\opC(z_1, z_2)=K_{12}\,\(\op_1\,\opc(z_1)+\opc(z_1)\op_2+\op_1\opc(z_2)+\op_2\,\opc(z_2)\).
	\end{equation*}
	where $\op_1 = \op\otimes \id$ and $\op_2 = \id\otimes\op$. 
	It suffice to consider the first two terms in the above expansion since the others are handled in the exact same manner, i.e. we estimate 
	\begin{subequations}
		\begin{align}\label{def:dE_up_term5:1}
			&\frac{h^d}{\hbar} \im\intggam h^{2d}\tr_{12}\!\( K_{12}\,\op_1\,\opc(z_1)\,\opUp(z_1, z_2)\)\d z_1\d z_2\, ,
			\\\label{def:dE_up_term5:2}
			&\frac{h^d}{\hbar} \im\intggam h^{2d}\tr_{12}\!\( K_{12}\,\opc(z_1)\,\op_2\,\opUp(z_1, z_2)\)\d z_1\d z_2\, .
		\end{align}
	\end{subequations}
	In the first case, notice that 
	\begin{subequations}
		\begin{multline}\label{def:dE_up_term5:1a}
			\frac{h^d}{\hbar} \im\intggam h^{2d}\tr_{12}\!\( K_{12}\,\op_1\,\n{\chi_1-x_1}^2\,\opUp(z_1, z_2)\)\d z_1\d z_2
			\\
			=\frac{h^d}{\hbar} \im\intggam h^{2d}\tr_{12}\!\( K_{12}\com{x_1, \op_1} \cdot(\chi_1-x_1)\,\opUp(z_1, z_2)\)\d z_1\d z_2
			\\
			\quad +\frac{h^d}{\hbar} \im\intggam h^{2d}\tr_{12}\!\(K_{12}\,\op_1(\chi_1-x_1)\,\opUp(z_1, z_2)\cdot  (\chi_1-x_1)\)\d z_1\d z_2
		\end{multline}
	from which it follows
	\begin{align*}
		\n{\eqref{def:dE_up_term5:1a}}\le&\, Ch^{d-1}\Nrm{K}{L^\infty}\Nrm{[x, \op]}{\L^\infty}\cE_{\opUp}^{1/2}+Ch^{d-1}\Nrm{K}{L^\infty}\Nrm{\op}{\L^\infty}\cE_{\opUp}\\
		\le&\, C_K\, h^{(3d-7)/2}\, N_4^{1/2}\Nrm{\op}{\L^d}^{3/2}+C\(1+h^{d-1}\Nrm{K}{L^\infty}\Nrm{\op}{\L^\infty}\)\cE_{\opUp},
	\end{align*}
	where the second inequality follows the same argument as in~\eqref{est:commutator_moments_exchange-term_Egam_2}.
	Similarly, we see that
	\begin{multline}\label{def:dE_up_term5:1b}
		\frac{h^d}{\hbar} \im\intggam h^{2d}\tr_{12}\!\( K_{12}\,\op_1\,\n{\xi_1-\opp_1}^2\,\opUp(z_1, z_2)\)\d z_1\d z_2
		\\
		=\frac{h^d}{\hbar} \im\intggam h^{2d}\tr_{12}\!\( K_{12}[\opp_1, \op_1]\cdot(\xi_1-\opp_1)\,\opUp(z_1, z_2)\)\d z_1\d z_2
		\\
		\quad - h^d \re\intggam h^{2d}\tr_{12}\!\( \grad K_{12}\op_1\cdot(\xi_1-\opp_1)\,\opUp(z_1, z_2)\)\d z_1\d z_2
		\\
		\quad +\frac{h^d}{\hbar} \im\intggam h^{2d}\tr_{12}\!\(K_{12}\,\op_1(\xi_1-\opp_1)\,\opUp(z_1, z_2)\cdot  (\xi_1-\opp_1)\)\d z_1\d z_2
	\end{multline}
	\end{subequations}
	then, by a same argument as Inequality~\eqref{est:commutator_moments_exchange-term_Egam_2}, we have that
	\begin{multline*}
		\n{\eqref{def:dE_up_term5:1b}}\le 
		Ch^{(3d-7)/4}\Nrm{K}{L^\infty}\Nrm{\op}{\L^d}^{3/4}\,M_4^{1/4}\cE_{\opUp}^{1/2}\\
		 +Ch^{d}\Nrm{\grad K}{L^\infty}\Nrm{\op}{\L^\infty}\cE_{\opUp}^{1/2}+Ch^{d-1}\Nrm{K}{L^\infty}\Nrm{\op}{\L^\infty}\cE_{\opUp}\,.
	\end{multline*}
	This completes the estimate for Term \eqref{def:dE_up_term5:1}. 
	
	To estimate Term \eqref{def:dE_up_term5:2}, we follow a similar idea as above. We write 
	\begin{subequations}
	\begin{multline}\label{def:dE_up_term5:2a}
		\frac{h^d}{\hbar} \im\intggam h^{2d}\tr_{12}\!\( K_{12}\,\n{\chi_1-x_1}^2\,\op_2\,\opUp(z_1, z_2)\)\d z_1\d z_2
		\\
		=\frac{h^d}{\hbar}\im\intggam h^{2d}\tr_{12}\!\( K_{12} \,\op_2 (\chi_1-x_1)\,\opUp(z_1, z_2)\cdot (\chi_1-x_1) \)\d z_1\d z_2
	\end{multline}
	which leads to
	\begin{align*}
		\n{\eqref{def:dE_up_term5:2a}}\le Ch^{d-1}\Nrm{K}{L^\infty}\Nrm{\op}{\L^\infty}\cE_{\opUp}\,.
	\end{align*}
	Next, we have 
	\begin{multline}\label{def:dE_up_term5:2b}
		\frac{h^d}{\hbar} \im\intggam h^{2d}\tr_{12}\!\( K_{12}\,\n{\xi_1-\opp_1}^2\,\op_2\,\opUp(z_1, z_2)\)\d z_1\d z_2
		\\
		=\frac{h^d}{\hbar}\im\intggam h^{2d}\tr_{12}\!\( K_{12} \,\op_2 (\xi_1-\opp_1)\,\opUp(z_1, z_2)\cdot (\xi_1-\opp_1) \)\d z_1\d z_2
		\\
		\quad -h^d\re\intggam h^{2d}\tr_{12}\!\( \grad K_{12}\,\op_2\cdot  (\xi_1-\opp_1)\,\opUp(z_1, z_2) \)\d z_1\d z_2
	\end{multline}
	\end{subequations}
	which yields 
	\begin{equation*}
		\n{\eqref{def:dE_up_term5:2b}}\le Ch^{d-1}\Nrm{K}{L^\infty}\Nrm{\op}{\L^\infty}\cE_{\opUp}+Ch^{d}\Nrm{\grad K}{L^\infty}\Nrm{\op}{\L^\infty}\cE_{\opUp}^{1/2}.
	\end{equation*}
	Hence we obtain the following bound 
	\begin{multline*}
		\n{\eqref{def:dE_up_term5}} \le C \(1+h^{d-1}\Nrm{K}{L^\infty}\Nrm{\op}{\L^\infty}+h^{d}\Nrm{\grad K}{L^\infty}\Nrm{\op}{\L^\infty}\) \cE_{\opUp}
		\\
		+ C_K \(h^{d}\Nrm{\op}{\L^\infty} + h^{(3d-7)/2}\(N_4+M_4\)^{1/2}\Nrm{\op}{\L^d}^{3/2}\) .
	\end{multline*}

	Finally, combining the above estimates, we see that there exists a constant $C$, dependent on $K$, such that we have the following inequality
	\begin{multline}\label{est:dE_up-inequality}
		\dt\cE_{\opUp}(t) \le C_{K}'\(1+h^{(3d-7)/2}(M_4+N_4)^{1/2} \Nrm{\op}{\L^d}^{3/2}+h^{d-1}\Nrm{\op}{\L^\infty}\) \cE_{\opUp}(t)
		\\
		+ C_K \(\frac{1-\eta}{N}+h^{d}\Nrm{\op}{\L^\infty} + h^{(3d-7)/2}\(N_4+M_4\)^{1/2}\Nrm{\op}{\L^d}^{3/2}\) .
	\end{multline}

	In the case when $Nh$ is bounded from below by a constant independent of $N$ and $\hbar$, then it follows from the last inequality in Formula~\eqref{eq:ineq_alpha_2} that
	\begin{equation}
		h^{d-1} \Nrm{\op}{\L^\infty} \leq \frac{1}{Nh}
	\end{equation}
	is bounded uniformly in $\hbar$ and $N$. Moreover, by Proposition~\ref{prop:semiclassical_propagation_of_moments} and Proposition~\ref{prop:schatten_propag}~\ref{part:N2h>1_regime}, we see that $\Nrm{\op}{\L^d}, M_4,$ and $N_4$ are propagated uniformly in $N$ and $\hbar$. Then, by Inequalities~\eqref{est:dE_gam-inequality} and \eqref{est:dE_up-inequality}, we see there exists a constant $C_{K, \op}$, depend on $K$ and $\op$,  such that 
	\begin{align*}
		\dt\(\cE_{\opgam}(t)+\cE_{\opUp}(t)\) &\le C_{K, \op} \(1+h^{(3d-7)/2}(M_4+N_4)^{1/2}\)\(\cE_{\opgam}(t)+\cE_{\opUp}(t)\)
		\\
		&\qquad+ C_{K,\op}\,h \(\tfrac{\theta_\alpha+1}{N\, h}+h^{(3d-9)/2}\(N_4+M_4\)^{1/2}\)
		\\
		&\le C_{K, \op}\,g_1(t) \(\cE_{\opgam}(t)+\cE_{\opUp}(t)\)+C_{K, \op}\,h\,g_0(t)\,, 
	\end{align*}
	where $g_i(t)\ge 1+ h^{i+(3d-9)/2}\(N_4+M_4\)^{1/2}$. Recalling the definitions~\eqref{eq:wasserstein-1} and~\eqref{eq:wasserstein-2}, we conclude the proof of Theorem~\ref{thm:main} by Gr\"{o}nwall's lemma.

	In the case when $Nh\ll 1$, $\theta_\alpha^\init \leq C \,N\,h^{2d/p}$ and $\Nrm{\op^\init}{\L^p} \leq C$ for some constant $C$ independent of $\hbar$, then it follows from Proposition~\ref{prop:schatten_propag} that there exists $C>0$ independent of $\hbar$ such that for any $t\in[0,T]$ with $T = C\,h^{1-d/p}$,
	\begin{equation*}
		h^{d-1} \Nrm{\op}{\L^\infty} \leq h^{\frac{d}{p'}-1}\Nrm{\op}{\L^p} \leq C\, .
	\end{equation*}
	The moments also remain propagated uniformly in $\hbar$ and $N$ in this case. Then, by Inequalities \eqref{est:dE_gam-inequality} and \eqref{est:dE_up-inequality}, we see there exists $C_{K,\op}$, depending on $K$ and $\Nrm{\op}{\L^d}$,  such that 
	\begin{align*}
		\dt\(\cE_{\opgam}(t)+\cE_{\opUp}(t)\) &\le C_{K, \op} \(1+h^{(3d-7)/2}(M_4+N_4)^{1/2}\)\(\cE_{\opgam}(t)+\cE_{\opUp}(t)\)\\
		&\qquad + C_{K,\op}\,h \(\tfrac{\theta_\alpha}{N\,h}+h^{(3d-9)/2}\(N_4+M_4\)^{1/2}\)
		\\
		&\le C_{K,\op}'\,g_1(t) \(\cE_{\opgam}(t)+\cE_{\opUp}(t)\)+C_{K, \op}'\,h\,g_0(t)\,, 
	\end{align*}
	where $g_i(t)\ge 1+ h^{i+(3d-9)/2}\(N_4+M_4\)^{1/2}$, for $i\in\{0,1\}$. Notice that the last inequality is possible since $\theta_\alpha/(N\,h)\le C h$ on  $[0, T]$. Again, we conclude the proof of Theorem~\ref{thm:main_2} by Gr\"{o}nwall's lemma.

\section{Application to the effective approximation of quantum systems}\label{sec:application}

	In this section, we combine the result from the previous section and the result in \cite{marcantoni_dynamics_2023}. To avoid a substantial detour from the goal of the paper, we will provide a concise introduction on the method of second quantization, covering only the essential definitions necessary for stating the main result of \cite[Theorem 3.3]{marcantoni_dynamics_2023}. Also, in this section, we assume the scaling $N\, h^d =1$ with $d=3$.

\subsection{Quasi-free approximation of interacting spin-\texorpdfstring{$\frac12$}{1/2} fermions}

	Let $\hh$ denote the complex Hilbert space $L^2(X)$ where $X=\Rd \times\{\uparrow, \downarrow\}$. The elements of $X$ are expressed as ordered pairs $\x = (x, \tau)$ where $x \in \Rd$ is the spatial variable and $\tau \in \{\uparrow, \downarrow\}$ is called the spin label. Notice we have the identification $\hh \cong L^2(\Rd)\otimes \CC^2$. Let $\hh^{\wedge n} := \hh\wedge \cdots \wedge \hh$ denote the $n$-fold anti-symmetric tensor product. We define the fermionic Fock space $\cF$ over $\hh$ to be the closure of algebraic direct sum
	\begin{equation}
		\cF^{\mathrm{alg}}(\hh) := \CC\oplus \bigoplus_{n=1}^\infty \hh^{\wedge n}
	\end{equation}
	with respect to the norm $\Nrm{\cdot}{\cF}$ induced by the endowed inner product
	\begin{equation}
		\Inprod{\Psi}{\Phi} = \conj{\psi^{(0)}}\, \varphi^{(0)} + \sum_{n\geq 1} \Inprod{\psi^{(n)}}{\varphi^{(n)}}_{\hh^{\otimes n}},
	\end{equation}
	for any pair of vectors $\Psi = (\psi^{(0)}, \psi^{(1)}, \ldots)$ and $\Phi = (\varphi^{(0)}, \varphi^{(1)}, \ldots)$ in $\cF^{\mathrm{alg}}(\hh)$. A normalized vector $\Psi$ in $\cF$ is called a Fock state or a pure state. The vacuum, defined by the vector $\Omega_{\cF} = (1, 0, 0, \ldots)\in\cF,$
	describes the state with no particles.
	
	For every $\x \in X$, we define the corresponding creation and annihilation operator-valued distributions, denoted by $a^\ast_{\x}$ and $a_{\x}$, acting on $\cF$ by their actions on the $n$-sector of $\cF$ as follows
	\begin{align*}
		(a^*_{\x}\,\Psi)^{(n)}(\underline{\x}_n) &:= \frac{1}{\sqrt{n}}\sum^n_{j=1} \(-1\)^{j-1}\delta(x-x_j)\,\delta_{\tau, \tau_j}\, \psi^{(n-1)}(\underline{\x}_{n\backslash\, j}),
		\\
		(a_{\x}\,\Psi)^{(n)}(\underline{\x}_n) &:= \sqrt{n+1}\,\psi^{(n+1)}(\x, \underline{\x}_n),
	\end{align*}
	where $ \underline{\x}_{n}:= (\x_1, \ldots, \x_n), \underline{\x}_{n\backslash\, j}:= (\x_1, \ldots, \cancel{\x}_j, \ldots, \x_n),$ and $\delta_{\tau, \tau'}$ is the Kronecker delta. Moreover, the action of the annihilation operator on the vacuum of $\cF$ is defined to be $a_{\x}\,\Omega_{\cF} = 0$. Then, we extend the operators linearly to the whole $\cF$. It can easily be checked that the collection of creation and annihilation operators on $\cF$ satisfies the canonical anti-commutation relations
	\begin{equation}\label{eq:CAR}
		\com{a_{\x}, a^*_{\x'}}_+ = \delta(x-x')\delta_{\tau, \tau'}\, , \quad \com{a_{\x}, a_{\x'}}_+ = \com{a^*_{\x}, a^*_{\x'}}_+ = 0
	\end{equation}
	for all $\x, \x' \in X$ where $\com{A, B}_+ = AB + BA$ is the anti-commutator of the operators $A$ and $B$. Another useful operator is given by the number operator 
	\begin{equation*}
		\cN = \bigoplus^\infty_{n=1} n\,\id_{\hh^{\wedge n}}
	\end{equation*}
	which counts the number of particles in each sector.

	Consider the fermionic Fock state $\Psi\in \cF$ with an expected number of particles equals to $N$, i.e. 
	$\Inprod{\Psi}{\cN\,\Psi}_{\cF}= N$. We define its one-particle reduced density operator $\ome$ and its pairing operator $\al$ to be the operators with integral kernels 
	\begin{subequations}
		\begin{align}
			\ome(\x;\y) &:= \Inprod{\Psi}{a_{\y}^*\,a_{\x}\, \Psi},
			\\
			\al(\x, \y) &:= \Inprod{\Psi}{a_{\y}\,a_{\x}\, \Psi}, \quad \op_{\alpha,\Psi}(\x_{12};\y_{12}) := \frac{1}{h^d\,\theta_\Psi}\, \alpha(\x_1;\x_2)\,\alpha(\y_1;\y_2),
		\end{align}
	\end{subequations}
	where $\theta_\Psi = \frac{1}{N} \Nrm{\alpha_\Psi}{2}^2 = \Nrm{\alpha_\Psi}{\L^2}^2 \in [0,1]$ is such that $h^{2d}\Tr{\op_{\alpha,\Psi}} = 1$. Notice, we have that $h^d \tr_{\hh}\!\(\ome\) = 1$ while $\tr_{\hh}\(\al\) = 0$. Moreover, since we are in the case of fermions, it follows from Properties \eqref{eq:CAR} that $0\leq \ome\leq \id$ and $\al$ is anti-symmetric. More compactly, we introduce the generalized one-particle density operator acting on $\hh\oplus\hh$ by
	\begin{equation}\label{eq:Gamma_Psi}
		\Gamma_{\Psi} := \(\begin{array}{cc}
		\ome & \al \\
		\al^* & \id-\overline{\ome}
		\end{array}\), 
	\end{equation}
	which satisfies $0\leq \Gamma \leq \id_{\hh\oplus\hh}$.
	
	We say that $\Psi$ is a quasi-free pure state if
	\begin{equation*}
		\Inprod{\Psi}{a^{\sharp_1}_{\x_1}a^{\sharp_2}_{\x_2}\cdots a^{\sharp_{2\ell-1}}_{\x_{2\ell-1}}\,\Psi} = 0
	\end{equation*}
	and 
	\begin{equation*}
		\Inprod{\Psi}{a^{\sharp_1}_{\x_1}a^{\sharp_2}_{\x_2}\cdots a^{\sharp_{2\ell}}_{\x_{2\ell}}\,\Psi} = \sum_{\pi \in \mathbb{P}_{2\ell}} \operatorname{sgn}(\pi)\prod^{\ell}_{\jj=1}\Inprod{\Psi}{a^{\sharp_{\pi(2\jj-1)}}_{\x_{\pi(2\jj-1)}}a^{\sharp_{\pi(2\jj)}}_{\x_{\pi(2\jj)}}\,\Psi} 
	\end{equation*}
	for all $\ell \in \N$, where $a^\sharp$ denotes either $a^\ast$ or $a$ and $\mathbb{P}_{2\ell}$ is the set of pairings, that is, the subset of permutations of $2\ell$ elements satisfying 
	\begin{equation*}
		\pi(2\jj-1) < \pi(2\jj)\quad \forall\, \jj =1, \ldots, \ell \quad \text{ and } \quad 
		\pi(2\jj-1) < \pi(2\jj+1) \quad \forall\, \jj =1, \ldots, \ell-1.
	\end{equation*}
	In other words, observables associated with product of creation and annihilation operators of a quasi-free state are completely characterized by $\ome$ and $\al$. Moreover, if $\Psi$ is quasi-free, then $\Gamma_{\Psi}$ is a projection operator, or more precisely
	\begin{equation}\label{conditions:Gamma}
		0\le \op \le \id, \quad \alpha^\ast = -\conj{\alpha},\quad \op\,\alpha =\alpha\,\conj{\op}, \quad \text{ and } \quad \n{\alpha^\ast}^2 = \op(\id -\op).
	\end{equation} 
	Conversely, if $\Gamma$ is of the form \eqref{eq:expression_Gamma} satisfying Conditions \eqref{conditions:Gamma} and $\gamma$ is a trace class operator, then there exists a quasi-free pure state $\Psi$ such that $\Gamma_{\Psi}=\Gamma$ (cf. Chapter 10 or Appendix G in \cite{solovej_many_2014}). Furthermore, if $\Psi$ is a quasi-free pure state with finite expected number of particles, then there exists a unitary transformation $\sfR$, parameterized by $\ome$ and $\al$, such that $\Psi = \sfR\Omega_{\cF}$. $\sfR$ is a Bogoliubov transformation (cf. \cite{solovej_many_2014}).

	Let $K(x)$ be a spin-independent radial function. Define the Hamiltionian in the Fock space by
	\begin{equation*}
		\cH_N = \int_{X} a^\ast_{\x}\(-\tfrac{\hbar^2}{2}\lapl_x\) a_{\x}\, \mu(\d\x) +\frac{1}{2N}\int_{X\times X} K(x-y)\,a^\ast_{\x}a^\ast_{\y}a_{\y}a_{\x}\,\mu(\d\x)\,\mu(\d\y),
	\end{equation*}
	where $\mu$ is the tensor product of the Lebesgue measure on $\Rd$ and the counting measure, and consider the time-dependent Fock state $\Psi(t) = \Psi$ given by
	\begin{equation}\label{eq:many-body_state}
		\Psi = e^{-i \cH_N t/\hbar}\,\Psi^\init = e^{-i \cH_N t/\hbar}\,\sfR^\init \Omega_{\cF}
	\end{equation}
	where $\Psi^\init$ is some quasi-free state such that $\op^\init := \op_{\Psi^\init}$ and $\alpha^\init := \alpha_{\Psi^\init}$ satisfy the following conditions
	\begin{equation}\label{conditions:bounds_on_the_quasi-free_initial_data}
		h^d \tr_{\hh}(\op^\init)= 1 \quad \text{ and } \quad \theta_{\alpha^\init} \le C\, N^{-1/3}\, .
	\end{equation}
	Then it was proved that the quadratic in creation and annihilation operators observables of the state $\Psi_t$ are well-approximated by the BdG dynamics \eqref{eq:BdG} (with spins) in norms. 
	More precisely, the main result in \cite{marcantoni_dynamics_2023} states the following.

	\begin{theorem}[Theorem 3.3 in \cite{marcantoni_dynamics_2023}]\label{thm:mean_field_limit}
		Assume $K \in L^1(\Rd)$ and $\widehat K(\xi)\,(1+\n{\xi}^2) \in L^1(\Rd)$. Assume the initial data $\Psi^\init$ is a quasi-free state, with $\ome^\init = \op_{\Psi^\init}$ and $\al^\init =\alpha_{\Psi^\init}$, satisfying Conditions \eqref{conditions:bounds_on_the_quasi-free_initial_data}. Furthermore, assume $\ome^\init$ and $\al^\init$ satisfy the following commutator bounds: there exists $C>0$ such that 
		\begin{align}
			\sup_{\xi \in \Rd}\frac{1}{1+\n{\xi}}\Nrm{\com{e^{i\xi\cdot x}, \ome^\init}}{\L^2} \le \frac{C}{\sqrt{\hbar}}\, \quad \Nrm{\com{\grad, \ome^\init }}{\L^2}, \Nrm{\com{\grad, \al^\init }}{\L^2} \le \frac{C}{\sqrt{\hbar}}\, .
		\end{align} 
		Suppose $\Psi$ is given by Expression \eqref{eq:many-body_state} and let $(\op,\alpha)$ be a solution of the BdG dynamics \eqref{eq:BdG} with initial data $(\op^\init,\alpha^\init)$. Then there exists $\kappa_1, \kappa_2>0$, independent of $N$, such that we have the estimates for any $t\geq 0$,
		\begin{align}\label{est:op_vs_op-Psi_L2}
			\Nrm{\op_{\Psi}-\op}{\L^2(\hh)} &\le \frac{1}{\sqrt{N}}\,\exp(\kappa_1\exp(\kappa_2 \,t))\,,
			\\\label{est:al_vs_al-Psi_L2}
			\Nrm{\alpha_\Psi - \alpha}{\L^2(\hh)} &\le \frac{1}{\sqrt{N}} \exp(\kappa_1\exp(\kappa_2 \,t))\, .
		\end{align}
		where $\L^2(\hh)$ denotes the rescaled Hilbert--Schmidt norm for operators on $\hh$, also given in terms of the integral kernel by $\Nrm{\op}{\L^2(\hh)}^2 = h^d \int_{X\times X} \n{\op(\x,\y)}^2 \mu(\d\x)\,\mu(\d\y)$.
	\end{theorem}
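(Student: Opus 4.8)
The plan is to lift both sides of the comparison into the fermionic Fock space $\cF$ and to measure their distance through the expected number of ``fluctuation'' particles; the spin label is inert here and carried along passively. Let $\sfR(t)$ be the Bogoliubov transformation implementing the solution $(\op(t),\alpha(t))$ of the BdG equations~\eqref{eq:BdG}, normalised so that the generalised one-particle operator of $\sfR(t)\,\Omega_{\cF}$ equals the matrix~\eqref{eq:Gamma_Psi} built from $(\op(t),\alpha(t))$ --- this is possible because $\Gamma(t)$ stays a projection along the BdG flow. Since $\Psi^\init = \sfR^\init\,\Omega_{\cF}$ is quasi-free, the many-body evolution factorises as $\Psi(t) = \sfR(t)\,\cU(t)\,\Omega_{\cF}$, where the \emph{fluctuation dynamics} is $\cU(t) := \sfR(t)^\ast\, e^{-i\cH_N t/\hbar}\,\sfR^\init$ and the fluctuation vector $\Phi_t := \cU(t)\,\Omega_{\cF}$ satisfies $\Phi_0 = \Omega_{\cF}$. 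Conjugating the generalised one-particle density of $\Psi(t)$ by $\sfR(t)$ produces $\Gamma(t)$ plus operators quadratic in the $a^\sharp$; estimating their $\Phi_t$-expectations through $\n{\Inprod{\Phi}{a^\sharp(f)\,a^\sharp(g)\,\Phi}} \le \Nrm{f}{\hh}\,\Nrm{g}{\hh}\(\Inprod{\Phi}{\cN\,\Phi}+1\)$ and the CAR~\eqref{eq:CAR}, the theorem reduces to a bound of the form
\[
	\Nrm{\op_\Psi-\op}{\L^2(\hh)}^2 + \Nrm{\alpha_\Psi-\alpha}{\L^2(\hh)}^2 \le \frac{C}{N}\(\Inprod{\Phi_t}{\cN\,\Phi_t}+1\),
\]
so that everything reduces to controlling the growth of $\Inprod{\Phi_t}{\cN\,\Phi_t}$, which vanishes at $t=0$.

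The main computation is that of the generator $\cG(t)$ of the fluctuation dynamics, $i\hbar\,\dpt\Phi_t = \cG(t)\,\Phi_t$, obtained by conjugating $\cH_N$ by $\sfR(t)$ and adding the term encoding the time-dependence of $\sfR(t)$. Writing $\sfR(t)^\ast a_\x\,\sfR(t)$ as an affine combination of $a$ and $a^\ast$ whose coefficients are the integral kernels of $\op(t),\conj{\op(t)},\alpha(t),\conj{\alpha(t)}$, the interaction $\tfrac{1}{2N}\int K(x-y)\,a^\ast_\x a^\ast_\y a_\y a_\x$ becomes, after the rotation, a sum of terms of all degrees in the $a^\sharp$; the BdG equations~\eqref{eq:BdG} are precisely what makes the resulting quadratic part self-consistent, so that, modulo a $c$-number, $\cG(t)$ reduces to cubic and quartic terms with coefficients built from $K$ and the kernels of $\op(t)$ and $\alpha(t)$. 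Only the number-changing pieces enter $\dpt\Inprod{\Phi_t}{\cN\,\Phi_t}$, and using Cauchy--Schwarz in $\cF$, the bound $\Nrm{a(f)\,\Phi}{\cF} \le \Nrm{f}{\hh}\,\Nrm{\cN^{1/2}\Phi}{\cF}$, and the hypotheses $K\in L^1$ and $\widehat K\,(1+\n{\cdot}^2)\in L^1$, I would reach a differential inequality
\[
	\n{\dpt\Inprod{\Phi_t}{\cN\,\Phi_t}} \le C(t)\(\Inprod{\Phi_t}{\cN\,\Phi_t}+1\),
\]
in which $C(t)$ depends on norms of $K$ and on the rescaled Hilbert--Schmidt sizes of $\com{e^{i\xi\cdot x},\op(t)}$, $\com{\grad,\op(t)}$ and $\com{\grad,\alpha(t)}$ --- of order $\hbar^{-1/2}$ at $t=0$ by assumption --- weighted by the powers of $\hbar$ supplied by the mean-field coupling so as to remain $\hbar$-uniform under the scaling $N h^d = 1$.

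It then remains to propagate those commutator bounds along the BdG flow: differentiating $\com{e^{i\xi\cdot x},\op(t)}$, $\com{\grad,\op(t)}$ and $\com{\grad,\alpha(t)}$ in time via~\eqref{eq:BdG}, commuting through the Hartree--Fock Hamiltonian and through the exchange and pairing operators, and controlling the latter with $\widehat K\,(1+\n{\cdot}^2)\in L^1$ in the spirit of Inequality~\eqref{eq:Xop}, one closes a linear Gr\"onwall inequality showing that all three quantities stay of order $\hbar^{-1/2}$ up to a factor $e^{\kappa t}$, hence $C(t)\le \kappa'\,e^{\kappa t}$. Applying Gr\"onwall once more to the previous display gives $\Inprod{\Phi_t}{\cN\,\Phi_t} \le \exp(\kappa_1\, e^{\kappa_2 t})$, which together with the first display yields~\eqref{est:op_vs_op-Psi_L2}--\eqref{est:al_vs_al-Psi_L2}; the double exponential reflects exactly these two nested Gr\"onwall steps.

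The main obstacle is the control of the cubic part of $\cG(t)$, which is what distinguishes the pairing case from the pairing-free Hartree--Fock limit. After the Bogoliubov rotation the interaction generates terms of the schematic form $\int K(x-y)\,\alpha(t;x,x')\,a^\ast a^\ast a$, which are not bounded by $\cN$ on their own; one has to extract an extra gain either from the smallness $\theta_{\alpha(t)} \le C\,N^{-1/3}$ propagated from~\eqref{conditions:bounds_on_the_quasi-free_initial_data} or from the kinetic energy via the $\hbar$-weighted commutator bounds, and then verify that under the scaling $N h^d = 1$ the accumulated powers of $\hbar$ and the mean-field factor $1/N$ balance so that the contribution to $\dpt\Inprod{\Phi_t}{\cN\,\Phi_t}$ costs no power of $N$ and no power of $\hbar$; this bookkeeping, together with keeping the residual ``self-consistent'' quadratic remainder of $\cG(t)$ under control, is the delicate point of the argument.
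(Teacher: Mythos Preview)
The paper does not prove this theorem: it is quoted verbatim as Theorem~3.3 of \cite{marcantoni_dynamics_2023} and used as a black box in Section~\ref{sec:application}, so there is no ``paper's own proof'' to compare against. Your sketch is therefore not competing with anything in the present manuscript.

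That said, the strategy you outline --- conjugating the many-body evolution by the time-dependent Bogoliubov map $\sfR(t)$, reducing the $\L^2$ bounds to control of $\Inprod{\Phi_t}{\cN\,\Phi_t}$ for the fluctuation vector, identifying the cubic and quartic pieces of the fluctuation generator after the BdG cancellation kills the quadratic part, and closing via two nested Gr\"onwall arguments (one for the propagation of the semiclassical commutator bounds on $\op(t),\alpha(t)$, one for the number of fluctuations) --- is exactly the architecture of the proof in \cite{marcantoni_dynamics_2023}, which in turn follows the template of \cite{benedikter_hartree_2016} adapted to nonzero pairing. Your identification of the cubic terms involving $\alpha$ as the delicate point, and of the smallness $\theta_{\alpha}\le C\,N^{-1/3}$ as the extra input needed to tame them, is also correct. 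So as a high-level roadmap your proposal is faithful to the actual argument; what it omits is the substantial operator-algebraic bookkeeping (the explicit Bogoliubov conjugation formulas, the precise normal-ordering, and the term-by-term estimates on the generator) that turns the sketch into a proof.
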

	
	The $\L^2(\hh)$ estimates on $\alpha$ in the above theorem imply $\L^1(\hh)$ estimates for $\op_{\alpha:1}$.
	
	\begin{cor}
		For any $t\geq 0$, we have the estimate 
		\begin{equation}
			\Nrm{\op_{\alpha,\Psi:1} - \op_{\alpha:1}}{\L^1(\hh)} \leq \frac{2\,e^{2\,C_K\,h^{d-1}\,t}}{\theta_{\alpha^\init}N}\, C_t^2 + \frac{4\,e^{C_K\,h^{d-1}\,t}}{\sqrt{\theta_{\alpha^\init}}\sqrt{N}} \,C_t\,.
		\end{equation}
		where $C_t/\sqrt{N}$ is the constant appearing on the right-hand side of Inequality~\eqref{est:al_vs_al-Psi_L2} and $C_K = \Nrm{K}{L^\infty}$. Similarly, we also have 
		\begin{equation}\label{est:op_alpha_vs_op_alpha-Psi_trace-norm}
			\Nrm{\op_{\alpha,\Psi} - \op_{\alpha}}{\L^1(\hh^{\otimes 2})} := h^{2d}\tr\n{\op_{\alpha,\Psi}-\op_{\alpha}} \le \frac{e^{2\,C_K\,h^{d-1}\,t} C_t}{\sqrt{\theta_{\alpha^\init}}\sqrt{N}}\, .
		\end{equation}
		Hence, if $\theta_{\alpha^\init} \geq N^{-c}$ with $c\in[1/3,1]$,
		\begin{equation}\label{est:op_vs_op-Psi_first_marginal_L1}
			\Nrm{\op_{\alpha,\Psi:1} - \op_{\alpha:1}}{\L^1(\hh)} \leq \frac{4\,e^{2C_K\,h^{d-1}\,t}\,C_t}{N^{(1-c)/2}}\,.
		\end{equation}
	\end{cor}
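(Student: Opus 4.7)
The plan is to exploit the rank-one structure of $\op_\alpha$ and $\op_{\alpha,\Psi}$ and combine it with the $\L^2(\hh)$-bound supplied by Theorem~\ref{thm:mean_field_limit}. Since $h^{2d}\,\op_\alpha = \ket{\Psi_\alpha}\!\!\bra{\Psi_\alpha}$ is the orthogonal projection onto the unit vector $\Psi_\alpha = \alpha/\Nrm{\alpha}{L^2(\R^{2d})}$ (and analogously for $\op_{\alpha,\Psi}$ and $\alpha_\Psi$), the standard bound for the trace norm of the difference of two rank-one projections onto unit vectors gives
\begin{equation*}
	\Nrm{\op_{\alpha,\Psi} - \op_\alpha}{\L^1(\hh^{\otimes 2})} \leq 2\,\Nrm{\Psi_{\alpha_\Psi} - \Psi_\alpha}{L^2(\R^{2d})}.
\end{equation*}
Adding and subtracting $\alpha_\Psi/\Nrm{\alpha}{L^2(\R^{2d})}$ and using the reverse triangle inequality controls the right-hand side by $2\,\Nrm{\alpha - \alpha_\Psi}{L^2(\R^{2d})}/\Nrm{\alpha}{L^2(\R^{2d})}$, and since $\Nrm{\alpha}{L^2(\R^{2d})}^2 = \theta_\alpha/h^d$ (using $Nh^d=1$), the conversion to the rescaled Hilbert--Schmidt norm yields $2\,\Nrm{\alpha - \alpha_\Psi}{\L^2(\hh)}/\sqrt{\theta_\alpha}$. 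Combined with \eqref{est:al_vs_al-Psi_L2}, this reduces \eqref{est:op_alpha_vs_op_alpha-Psi_trace-norm} to a uniform lower bound on $\theta_\alpha$.

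For the one-particle estimate, I would use the identity $\op_{\alpha:1} = \n{\alpha^\ast}^2/(Nh^d\,\theta_\alpha)$ from \eqref{eq:op_alpha:1} and decompose
\begin{equation*}
	\op_{\alpha:1} - \op_{\alpha,\Psi:1} = \frac{\n{\alpha^\ast}^2 - \n{\alpha_\Psi^\ast}^2}{Nh^d\,\theta_\alpha} + \frac{\n{\alpha_\Psi^\ast}^2\,(\theta_{\alpha_\Psi}-\theta_\alpha)}{Nh^d\,\theta_\alpha\,\theta_{\alpha_\Psi}}.
\end{equation*}
The first summand is treated via $\n{\alpha^\ast}^2 - \n{\alpha_\Psi^\ast}^2 = (\alpha-\alpha_\Psi)\alpha^\ast + \alpha_\Psi\,(\alpha-\alpha_\Psi)^\ast$ and the Schatten H\"older bound $\Nrm{AB}{\L^1}\leq \Nrm{A}{\L^2}\Nrm{B}{\L^2}$, producing a contribution of order $\Nrm{\alpha-\alpha_\Psi}{\L^2(\hh)}(\sqrt{\theta_\alpha}+\sqrt{\theta_{\alpha_\Psi}})/\theta_\alpha$. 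The second summand has $\L^1$ norm equal to $|\theta_{\alpha_\Psi}-\theta_\alpha|/\theta_\alpha$, and since $\theta_\alpha = \Nrm{\alpha}{\L^2(\hh)}^2$ the reverse triangle inequality gives $|\theta_{\alpha_\Psi}-\theta_\alpha| \leq (\sqrt{\theta_\alpha}+\sqrt{\theta_{\alpha_\Psi}})\,\Nrm{\alpha-\alpha_\Psi}{\L^2(\hh)}$. Absorbing the cross term with $\sqrt{\theta_{\alpha_\Psi}} \leq \sqrt{\theta_\alpha}+\Nrm{\alpha-\alpha_\Psi}{\L^2(\hh)}$ produces a total of the form $C\,\Nrm{\alpha-\alpha_\Psi}{\L^2(\hh)}/\sqrt{\theta_\alpha} + C\,\Nrm{\alpha-\alpha_\Psi}{\L^2(\hh)}^2/\theta_\alpha$, matching the two-term structure of the first stated inequality.

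The main obstacle is the lower bound on $\theta_\alpha$ along the BdG dynamics. I would obtain it by revisiting Equation~\eqref{eq:dt_theta}: writing $\sfA_\alpha = \theta_\alpha\,\op_\alpha$ and using the rank-one identity $h^{2d}\tr(\op_\alpha\, K_{12}\,\op_{12}) = \Inprod{\Psi_\alpha}{K_{12}\,\op_{12}\,\Psi_\alpha}$ together with the operator bound $\Nrm{K_{12}\,\op_{12}}{\L^\infty(\hh^{\otimes 2})} \leq 2\,\Nrm{K}{L^\infty}$ (since $\Nrm{\op}{\L^\infty} \leq (Nh^d)^{-1} = 1$ in the regime $Nh^d=1$), one obtains a differential inequality of the form $|\dt \theta_\alpha| \leq C\,C_K\,h^{d-1}\,\theta_\alpha$ with $C_K = \Nrm{K}{L^\infty}$. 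Gr\"onwall then produces $\theta_\alpha(t) \geq \theta_{\alpha^\init}\,e^{-2C_K h^{d-1} t}$, so $1/\sqrt{\theta_\alpha} \leq e^{C_K h^{d-1}t}/\sqrt{\theta_{\alpha^\init}}$ and $1/\theta_\alpha \leq e^{2C_K h^{d-1}t}/\theta_{\alpha^\init}$. Substituting these together with $\Nrm{\alpha-\alpha_\Psi}{\L^2(\hh)} \leq C_t/\sqrt{N}$ from \eqref{est:al_vs_al-Psi_L2} into the estimates of the previous two paragraphs delivers both displayed inequalities of the corollary. Finally, \eqref{est:op_vs_op-Psi_first_marginal_L1} is immediate: for $\theta_{\alpha^\init} \geq N^{-c}$ with $c\in[1/3,1]$ we have $C_t^2/(N\,\theta_{\alpha^\init}) \leq C_t^2\,N^{c-1}$ and $C_t/\sqrt{N\,\theta_{\alpha^\init}} \leq C_t\,N^{(c-1)/2}$; since $c-1 \leq (c-1)/2 \leq 0$ for $c\leq 1$, the $N^{-(1-c)/2}$ term dominates and yields the advertised rate.
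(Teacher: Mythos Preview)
Your proposal is correct and follows essentially the same approach as the paper: the same add-and-subtract decomposition of $\op_{\alpha:1}-\op_{\alpha,\Psi:1}$ via $\theta_\alpha\,\op_{\alpha:1}=\n{\alpha^*}^2$, the same H\"older/triangle estimates on $\n{\alpha^*}^2-\n{\alpha_\Psi^*}^2$ and $\theta_\alpha-\theta_{\alpha_\Psi}$, and the same Gr\"onwall argument on Equation~\eqref{eq:dt_theta} to obtain $\theta_\alpha\geq e^{-2C_Kh^{d-1}t}\theta_{\alpha^\init}$. The only cosmetic differences are that you phrase the two-particle bound via the rank-one projection inequality (which the paper leaves implicit under ``similar'') and that your kernels should live on $X\times X$ rather than $\R^{2d}$ in the spin setting of Section~\ref{sec:application}.
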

	
	\begin{proof}
		Since $Nh^d=1$, it holds $\theta_\alpha\,\op_{\alpha:1} = \n{\alpha^*}^2$. Hence
		\begin{align*}
			\Nrm{\op_{\alpha,\Psi:1} - \op_{\alpha:1}}{\L^1(\hh)} &\leq \Nrm{\frac{\theta_{\Psi}\op_{\alpha,\Psi:1} - \theta_{\alpha}\op_{\alpha:1}}{\theta_\alpha}}{\L^1(\hh)} + \frac{\n{\theta_\alpha-\theta_\Psi}}{\theta_\alpha} \Nrm{\op_{\alpha,\Psi:1}}{\L^1(\hh)}
			\\
			&\leq \frac{1}{\theta_\alpha}\( \Nrm{\n{\alpha_\Psi^*}^2 - \n{\alpha^*}^2}{\L^1(\hh)} + \n{\theta_\alpha-\theta_\Psi}\)
			\\
			&\leq \frac{1}{\theta_\alpha}\( \Nrm{\alpha_\Psi^*-\alpha^*}{\L^2}^2 + 2\Nrm{\alpha_\Psi^*-\alpha^*}{\L^2} \Nrm{\alpha^*}{\L^2} + \n{\theta_\alpha-\theta_\Psi}\).
		\end{align*}
		Since $\Nrm{\alpha^*}{\L^2}^2 = \Nrm{\alpha}{\L^2}^2 = \theta_\alpha$ and
		\begin{align*}
			\n{\theta_\Psi - \theta_\alpha} &= h^d\Tr{\n{\alpha_\Psi}^2-\n{\alpha}^2} \leq \Nrm{\alpha_\Psi-\alpha}{\L^2}^2 + 2 \Nrm{\alpha_\Psi-\alpha}{\L^2} \sqrt{\theta_\alpha}
		\end{align*}
		it yields
		\begin{equation*}
			\Nrm{\op_{\alpha,\Psi:1} - \op_{\alpha:1}}{\L^1(\hh)} \leq \frac{2}{\theta_\alpha} \Nrm{\alpha_\Psi-\alpha}{\L^2}^2 + \frac{4}{\sqrt{\theta_\alpha}} \, \Nrm{\alpha_\Psi-\alpha}{\L^2}.
		\end{equation*}
		To finish the proof, notice first that by Equation~\eqref{eq:dt_theta}
		\begin{equation}
			\n{\dt\theta_\alpha} \leq 2\,h^{d-1} \Nrm{K}{L^\infty} \Nrm{\op}{\L^\infty} \theta_\alpha \leq 2\,C_K\,h^{d-1} \,\theta_\alpha
		\end{equation}
		hence $e^{-2\,C_K\,h^{d-1}\,t} \theta_{\alpha^\init} \leq \theta_\alpha$, and use the previous theorem. The proof of Inequality~\eqref{est:op_alpha_vs_op_alpha-Psi_trace-norm} is similar. 
	\end{proof}

	\subsection{\texorpdfstring{$\mathrm{SU}(2)$}{SU(2)} invariance}
	The presence of spin labels in the BdG equation complicates our studies of its semiclassical limit. To overcome this difficulty, we need to isolate out the spin labels from the BdG equation (c.f. \cite{bach_generalized_1994}). We start by noting the isomorphism $L^2(\Rd\times\{\uparrow, \downarrow\})\cong L^2(\Rd)\otimes \CC^2$. In particular, we have the identification $\cB(L^2(\Rd\times\{\uparrow, \downarrow\}))\cong \cB(L^2(\Rd))\otimes M_{2\times 2}(\CC)$ between the two spaces of bounded operators, i.e. a bounded operator $T$ acting on $L^2(\Rd\times\{\uparrow, \downarrow\})$ is identify with the matrix 
	\begin{equation}\label{def:T_matrix}
		T = 
		\begin{pmatrix}
			T_{\upuparrows} & T_{\updownarrows}\\
			T_{\downuparrows} & T_{\downdownarrows}
		\end{pmatrix}
	\end{equation}
	where $T_{\sigma\tau}$ are bounded operators acting on $L^2(\Rd)$. To factor out the spins, we further restrict ourselves to the class of $\Gamma$ operators satisfying the following $\mathrm{SU}(2)$ invariance condition in the spin space: for every $S \in \mathrm{SU}(2)$, we have that 
	\begin{equation*}
		\sfS^\ast \Gamma \sfS = \Gamma \quad \text{ where } \quad \sfS = 
		\begin{pmatrix}
			S & 0\\
			0 & \conj{S}
		\end{pmatrix}.
	\end{equation*}
	In terms of $\op$ and $\alpha$, the $\mathrm{SU}(2)$ invariance reads 
	\begin{equation*}
		S^\ast \op S = \op \quad \text{ and } \quad S^\ast \alpha \conj{S} = \alpha\, .
	\end{equation*}
	By means of elementary linear algebra, we have that $\op$ is a scalar multiple of the identity matrix and $\alpha$ is a scalar multiple of the second Pauli matrix 
	\begin{equation*}
		\sigma^{(2)} =
		\begin{pmatrix}
			0 & -i\\
			i & 0
		\end{pmatrix} ,
	\end{equation*}
	or, equivalently, we have that 
	\begin{equation}\label{eq:spin_factoring_identities}
		\op(x, \tau; x', \tau') = \op_{s}(x, x')\,\delta_{\tau\tau'} \quad \text{ and } \quad \alpha(x, \tau, x', \tau') = \alpha_s(x, x')\,\sigma^{(2)}_{\tau\tau'}\, .
	\end{equation}
	By the Pauli exclusion principle, we must have that $\alpha_s$ is symmetric, i.e. $\alpha_s(x, x')=\alpha_s(x', x)$. We also write $\op = \op_s\otimes I$ where $I$ is the $2\times 2$ identity matrix and $\alpha = \alpha_s\otimes\sigma^{(2)}$. Also, notice, by Expressions \eqref{eq:spin_factoring_identities}, the last identity of Conditions \eqref{conditions:Gamma} now reads
	\begin{equation}\label{eq:spinless_pure_quasifree_state_id}
		\n{\alpha_s^\ast}^2 = \op_s\(\id-\op_s\)\, .
	\end{equation}
	The physical meaning of the $\mathrm{SU}(2)$ invariance is discussed in \cite{hainzl_bardeencooperschrieffer_2016}.

	By Expressions \eqref{eq:spin_factoring_identities}, we write 
	\begin{equation*}
		\sfH_{\op} = \(\tfrac{\n{\opp}^2}{2} + 2\,K\ast \varrho_s(x) -h^d\sfX_{\op_s}\)\otimes I =: \sfH_{\op_s}\otimes I \quad \text{ and }\quad \sfX_\alpha\,\alpha^* = \(\sfX_{\alpha_s}\alpha_s^\ast\)\otimes I\, .
	\end{equation*}
	Then this yields the spinless equations
	\begin{subequations}\label{eq:spinless_BdG}
		\begin{align}\label{eq:spinless_BdG_gamma}
			i\hbar\,\dpt \op_s &= \com{\sfH_{\op_s},\op_s} + \sfX_{\alpha_s}\,\conj{\alpha}_s - \alpha_s\,\conj{}\sfX_{\conj{\alpha}_s}\, ,
			\\\label{eq:spinless_BdG_alpha}
			i\hbar\,\dpt \alpha_s &= \sfH_{\op_s}\alpha_s + \alpha_s \conj{\sfH}_{\op_s} + h^d\(\sfX_{\alpha_s}(\id-\conj{\op}_s)-\op_s \sfX_{\alpha_s}\) ,
		\end{align}
	\end{subequations}
	or, equivalent, in matrix form 
	\begin{equation}\label{eq:spinless_BdG_Gamma}
		i\hbar\,\dpt \Gamma_s = \com{\sfH_{\Gamma_s},\Gamma_s}
	\end{equation}
	where 
	\begin{equation*}
		\Gamma_s=	
		\begin{pmatrix}
			\op_s & \alpha_s\\
			\conj{\alpha}_s & \id-\conj{\op}_s
		\end{pmatrix}
		\quad \text{ and } \quad 
		\sfH_{\Gamma_s} =
		\(\begin{array}{cc}
			\sfH_{\op_s} & h^d\sfX_{\alpha_s}\\
			h^d\sfX_{\conj{\alpha}_s} & -\conj{\sfH}_{\op_s}
		\end{array}\)\, .
	\end{equation*}
	Notice $0\le\Gamma_s\le \id$ is self-adjoint and $\Gamma_s^2=\Gamma_s$.

	Notice the form of Equations~\eqref{eq:spinless_BdG} is almost identical to that of Equations~\eqref{eq:BdG} (except for the fact that $\alpha_s$ is symmetric and that there is a 2 in front of $K\ast\varrho_s$). In particular, we could reuse the argument in Section~\ref{sec:strategy} to obtain a semiclassical limit for Equations~\eqref{eq:spinless_BdG} since the discussion in Section~\ref{sec:strategy} is independent of the fact whether $\alpha$ is a symmetric or anti-symmetric function. In short, Theorem~\ref{thm:main} remains true for $\op_s$ and $\alpha_s$. Moreover, Inequalities~\eqref{est:op_vs_op-Psi_L2} and~\eqref{est:op_vs_op-Psi_first_marginal_L1} now read
	\begin{align*}
		&\Nrm{\op_{\Psi}-\op_s\otimes I}{\L^2(\hh)} \le \frac{C}{\sqrt{N}}\,\exp(\kappa_1\exp(\kappa_2 \,t))\,,
		\\
		&\Nrm{\op_{\alpha,\Psi} - \op_{\alpha, s}\otimes I}{\L^1(\hh^{\otimes 2})} \leq \frac{C\exp(\kappa_1\exp(\kappa_2 \,t))}{N^{(1-c)/2}}\,,
	\end{align*}
	if the initial state is $\mathrm{SU}(2)$ invariant. 

	\subsection{The joint mean-field and semiclassical limit}
		It follows from Theorem~\ref{thm:mean_field_limit} and our main result Theorem~\ref{thm:main} that one can obtain a joint limit from the many body model described above to the Vlasov equation. Indeed, we start by defining the (matrix-valued) Wigner transform for an operator $T$ of the form~\eqref{def:T_matrix} by 
		\begin{equation*}
			\mathbf{f}_T(\chi, \xi)= \intd e^{-i\xi\cdot y/\hbar}\,	
			T(\chi+\tfrac{y}{2}, \chi-\tfrac{y}{2})\d y\, ,
		\end{equation*}
		i.e. take the Wigner transform of each entry of $T$ (cf. \cite{gerard_homogenization_1997}). Then it follows from \cite[Corollary~1.1]{lafleche_quantum_2023} that
		\begin{align*}
			\Nrm{f\otimes I-\mathbf{f}_{\op_s\otimes I}}{H^{-1}(\Rdd)\otimes\CC^{2\times2}} &= 2\Nrm{f-f_{\op_s}}{H^{-1}(\Rdd)}
			\\
			&\leq 2\Wh(f,\op_s) + 2\(1+\sqrt{d}\)\sqrt{\hbar}
		\end{align*}
		where $f_{\op_s}$ is the Wigner transform of $\op_s$ and $f$ is the solution of the Vlasov equation. If $\mathbf{F}_{\op_{\alpha, s}\otimes I}$ denotes the Wigner transform of $\op_{\alpha, s}\otimes I$, i.e. $\mathbf{F}_{\op_{\alpha, s}\otimes I}$ is a $2\times 2$ matrix with entries being functions of $4d$ variables, then it also follows that 
		\begin{align*}
			\Nrm{F\otimes I-\mathbf{F}_{\op_{\alpha, s}\otimes I}}{H^{-1}(\Rdd\times\Rdd)\otimes\CC^{2\times2}} &= 2\Nrm{F-F_{\op_{\alpha, s}}}{H^{-1}(\Rdd\times \Rdd)}
			\\
			&\leq 2\Wh(F,\op_{\alpha, s}) + 2\(1+\sqrt{2d}\)\sqrt{\hbar}
		\end{align*}

		On the other hand, Theorem~\ref{thm:mean_field_limit} implies an estimate in $H^{-1}$ for the Wigner transforms since the Wigner transform is an isometry from $\L^2$ to $L^2$, and then by the continuous embedding $L^2 \subset H^{-1}$, we have that 
		\begin{equation*}
			\Nrm{\mathbf{f}_{\op_{\Psi}}-\mathbf{f}_{\op_s\otimes I}}{H^{-1}(\Rdd)\otimes\CC^{2\times2}} \le C\Nrm{\mathbf{f}_{\op_{\Psi}}-\mathbf{f}_{\op_s\otimes I}}{L^2(X)} = C\Nrm{\op_{\Psi}-\op_s\otimes I}{\L^2(\hh)}\, .
		\end{equation*}
		Similarly, by the isometry property of the Wigner transform and the quantum Sobolev inequality (see \cite[Theorem 1]{lafleche_quantum_2024}), we have that 
		\begin{equation*}
			\Nrm{\mathbf{F}_{\op_{\alpha, \Psi}}-\mathbf{F}_{\op_{\alpha, s}\otimes I}}{H^{-6}(\Rdd\times\Rdd)\otimes\CC^{2\times2}} 
			\le C\Nrm{\op_{\alpha, \Psi}-\op_{\alpha, s}\otimes I}{\L^1(\hh\otimes \hh)}\, .
		\end{equation*}
		Let us summarize the result in the following theorem.
		
		\begin{theorem}\label{thm:joint_limit}
			Let $K$ satisfies the conditions of Theorem~\ref{thm:main} and Theorem~\ref{thm:mean_field_limit}. Assume the initial data have the forms $\ome^\init=\ome^\init_s\otimes I$ and $\al^\init=\al_s^\init\otimes \sigma^{(2)}$ satisfying Conditions~\eqref{conditions:bounds_on_the_quasi-free_initial_data}. Furthermore, assume $\ome^\init_s$ and $\al^\init_s$ satisfy the following commutator bounds: there exists $C>0$ such that 
			\begin{equation*}
				\sup_{\xi \in \Rd}\frac{1}{1+\n{\xi}}\Nrm{\com{e^{i\xi\cdot x}, \ome^\init}}{\L^2} \le \frac{C}{\sqrt{\hbar}}\,, \qquad \Nrm{\com{\grad, \ome^\init }}{\L^2}, \Nrm{\com{\grad, \al^\init }}{\L^2} \le \frac{C}{\sqrt{\hbar}}\,,
			\end{equation*}   
			and $\theta_{\alpha^\init} \geq N^{-c}$ with $c\in[1/3,1]$. Let $(f, F)$ be the solutions of the system~\eqref{eq:easy_equations} with initial conditions $(f^{\init}, F^{\init})$ satisfying the conditions of Theorem~\ref{thm:main}. Then there exists constant $C, \kappa_1, \kappa_2>0$ and a polynomial function $\Lambda(t)$, independent of $N$, and such that we have the following estimates
			\begin{multline*}
				\Nrm{\mathbf{f}_{\op_{\Psi}}-f\otimes I}{H^{-1}(\Rdd)\otimes\CC^{2\times2}} \le C\,\frac{\exp(\kappa_1\exp(\kappa_2 t))}{N^{1/2}}\\
				 + \(\Wh(f^\init, \op^\init)+\Wh(F^\init, \op_\alpha^\init) + \sqrt{\hbar}\) e^{\Lambda(t)}\, ,
			\end{multline*}
			\begin{multline*}
				\Nrm{\mathbf{F}_{\op_{\alpha, \Psi}}-F\otimes I}{H^{-6}(\Rdd\times\Rdd)\otimes\CC^{2\times2}} \le C\,\frac{\exp(\kappa_1\exp(\kappa_2 t))}{N^{(1-c)/2}}\\
				+ \(\Wh(f^\init, \op^\init)+\Wh(F^\init, \op_\alpha^\init) + \sqrt{\hbar}\) e^{\Lambda(t)}\, .
			\end{multline*}
		\end{theorem}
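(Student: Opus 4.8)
The plan is to derive both inequalities from a single triangle inequality in the relevant negative-order Sobolev space, splitting the total error into a \emph{mean-field} part, controlled by Theorem~\ref{thm:mean_field_limit} and its Corollary, and a \emph{semiclassical} part, controlled by Theorem~\ref{thm:main}. For the one-particle block I would write $\mathbf{f}_{\op_{\Psi}}-f\otimes I = (\mathbf{f}_{\op_{\Psi}}-\mathbf{f}_{\op_s\otimes I})+(\mathbf{f}_{\op_s\otimes I}-f\otimes I)$ and estimate each bracket in $H^{-1}(\Rdd)\otimes\CC^{2\times 2}$; for the pairing block I would split $\mathbf{F}_{\op_{\alpha,\Psi}}-F\otimes I$ the same way, estimating $\mathbf{F}_{\op_{\alpha,\Psi}}-\mathbf{F}_{\op_{\alpha,s}\otimes I}$ in $H^{-6}$ and, using the continuous embedding $H^{-1}\subset H^{-6}$, estimating $\mathbf{F}_{\op_{\alpha,s}\otimes I}-F\otimes I$ in $H^{-1}$.

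First I would reduce to the spinless problem. Since $K$ is spin-independent and $\op^\init=\op_s^\init\otimes I$, $\alpha^\init=\alpha_s^\init\otimes\sigma^{(2)}$ are $\mathrm{SU}(2)$-invariant, this invariance is propagated by the BdG flow, so the solution factors as $\op=\op_s\otimes I$, $\alpha=\alpha_s\otimes\sigma^{(2)}$ with $(\op_s,\alpha_s)$ solving the spinless equations~\eqref{eq:spinless_BdG}; as noted after~\eqref{eq:spinless_BdG_Gamma}, Theorem~\ref{thm:main} applies verbatim to the rescaled pair $(\op_s,\op_{\alpha,s})$. Since $Nh^d=1$ here, one has $\Nrm{\op_s}{\L^d}\le 1$, so by the remark following Theorem~\ref{thm:main} the double exponential there can be replaced by $e^{\Lambda(t)}$ for a polynomial $\Lambda$. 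Using $\sqrt{a+b+c}\le\sqrt a+\sqrt b+\sqrt c$, the conclusion of Theorem~\ref{thm:main} then gives
\[
\Wh(f,\op_s)+\Wh(F,\op_{\alpha,s})\le\bigl(\Wh(f^\init,\op^\init)+\Wh(F^\init,\op_\alpha^\init)+\sqrt\hbar\bigr)e^{\Lambda(t)/2}.
\]

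Next I would concatenate the four estimates already collected in the text preceding the theorem. For the mean-field brackets: the Wigner transform is an isometry from $\L^2$ to $L^2$ and $L^2\hookrightarrow H^{-1}$, hence the first one-particle bracket is $\le C\Nrm{\op_\Psi-\op_s\otimes I}{\L^2(\hh)}$, which by the $\mathrm{SU}(2)$-invariant form of Theorem~\ref{thm:mean_field_limit} is $\le CN^{-1/2}\exp(\kappa_1\exp(\kappa_2 t))$; and by the isometry together with the quantum Sobolev inequality of~\cite[Theorem~1]{lafleche_quantum_2024}, the first pairing bracket is $\le C\Nrm{\op_{\alpha,\Psi}-\op_{\alpha,s}\otimes I}{\L^1(\hh\otimes\hh)}\le CN^{-(1-c)/2}\exp(\kappa_1\exp(\kappa_2 t))$ by the Corollary, where the hypothesis $\theta_{\alpha^\init}\ge N^{-c}$ with $c\in[1/3,1]$ (compatible with Conditions~\eqref{conditions:bounds_on_the_quasi-free_initial_data}) enters. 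For the semiclassical brackets, \cite[Corollary~1.1]{lafleche_quantum_2023} gives the identities $\Nrm{f\otimes I-\mathbf{f}_{\op_s\otimes I}}{H^{-1}}=2\Nrm{f-f_{\op_s}}{H^{-1}}$ and $\Nrm{F\otimes I-\mathbf{F}_{\op_{\alpha,s}\otimes I}}{H^{-1}}=2\Nrm{F-F_{\op_{\alpha,s}}}{H^{-1}}$, together with $\Nrm{f-f_{\op_s}}{H^{-1}}\le\Wh(f,\op_s)+(1+\sqrt d)\sqrt\hbar$ and the analogous bound with $2d$ for $F$; inserting the Wasserstein bound of the previous paragraph and relabelling constants then yields the two stated inequalities.

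The only genuinely delicate point is the compatibility of hypotheses, not the final concatenation: to invoke Theorem~\ref{thm:main} for $(\op_s,\op_{\alpha,s})$ one needs the uniform bounds on $h^d\Tr{\op_s^\init\n{\opp}^4}$, $h^d\Tr{\op_s^\init\n{x}^4}$ and $\Nrm{\op_s^\init}{\L^d}$, and that $\op_{\alpha,s}^\init$ be an admissible rank-one datum with finite second moments. I would verify that these follow from the quasi-free structure together with the commutator bounds and Conditions~\eqref{conditions:bounds_on_the_quasi-free_initial_data} (using $Nh^d=1$ and $d=3$), or else add them explicitly to the assumptions; everything beyond that is bookkeeping.
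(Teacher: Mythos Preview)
Your proposal is correct and follows essentially the same route as the paper: the argument in Section~4.3 preceding the theorem is precisely the triangle-inequality split into a mean-field bracket (controlled via the Wigner isometry $\L^2\to L^2$, the embedding $L^2\hookrightarrow H^{-1}$, and the quantum Sobolev inequality into $H^{-6}$, combined with Theorem~\ref{thm:mean_field_limit} and its Corollary) and a semiclassical bracket (controlled via \cite[Corollary~1.1]{lafleche_quantum_2023} and Theorem~\ref{thm:main}, with the polynomial $\Lambda(t)$ coming from $Nh^d=1$). Your remark about verifying the moment and $\L^d$ hypotheses of Theorem~\ref{thm:main} for $(\op_s^\init,\op_{\alpha,s}^\init)$ is a fair caveat that the paper leaves implicit.
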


~

\paragraph{\bf Acknowledgments.} The third author expresses gratitude to Christian Hainzl for the valuable discussions on BCS theory and for suggesting  useful references.

~

\paragraph{\bf Funding.} This work is partially supported by the Swiss National Science Foundation through the NCCR SwissMAP and the SNSF Eccellenza project PCEFP2\_181153, and by the Swiss State Secretariat for Research and Innovation through the project P.530.1016 (AEQUA).

\bibliographystyle{abbrv} 
\bibliography{Vlasov}

\begin{thebibliography}{10}

\bibitem{amour_semiclassical_2013}
L.~Amour, M.~Khodja, and J.~Nourrigat.
\newblock The {{Semiclassical Limit}} of the {{Time Dependent Hartree}}--{{Fock
  Equation}}: The {{Weyl Symbol}} of the {{Solution}}.
\newblock {\em Analysis \& PDE}, 6(7):1649--1674, 2013.

\bibitem{athanassoulis_strong_2011-1}
A.~Athanassoulis and T.~Paul.
\newblock Strong and {{Weak Semiclassical Limit}} for some {{Rough
  Hamiltonians}}.
\newblock {\em Mathematical Models and Methods in Applied Sciences},
  22(12):1250038, 2011.

\bibitem{athanassoulis_strong_2011}
A.~Athanassoulis, T.~Paul, F.~Pezzotti, and M.~Pulvirenti.
\newblock Strong {{Semiclassical Approximation}} of {{Wigner Functions}} for
  the {{Hartree Dynamics}}.
\newblock {\em Rendiconti Lincei - Matematica e Applicazioni}, 22(4):525--552,
  2011.

\bibitem{bach_time-dependent_2022}
V.~Bach, S.~Breteaux, T.~Chen, J.~Fr{\"o}hlich, and I.~M. Sigal.
\newblock The {{Time-Dependent Hartree}}--{{Fock}}--{{Bogoliubov Equations}}
  for {{Bosons}}.
\newblock {\em Journal of Evolution Equations}, 22(2):46, May 2022.

\bibitem{bach_generalized_1994}
V.~Bach, E.~H. Lieb, and J.~P. Solovej.
\newblock Generalized {{Hartree}}--{{Fock Theory}} and the {{Hubbard Model}}.
\newblock {\em Journal of Statistical Physics}, 76(1):3--89, July 1994.

\bibitem{benedikter_hartree_2016}
N.~Benedikter, M.~Porta, C.~Saffirio, and B.~Schlein.
\newblock From the {{Hartree Dynamics}} to the {{Vlasov Equation}}.
\newblock {\em Archive for Rational Mechanics and Analysis}, 221(1):273--334,
  July 2016.

\bibitem{benedikter_diracfrenkel_2018}
N.~Benedikter, J.~Sok, and J.~P. Solovej.
\newblock The {{Dirac}}--{{Frenkel Principle}} for {{Reduced Density
  Matrices}}, and the {{Bogoliubov}}--de {{Gennes Equations}}.
\newblock {\em Annales Henri Poincar{\'e}}, 19(4):1167--1214, Apr. 2018.

\bibitem{caglioti_towards_2022}
E.~Caglioti, F.~Golse, and T.~Paul.
\newblock Towards {{Optimal Transport}} for {{Quantum Densities}}.
\newblock {\em Annali della Scuola Normale Superiore, Classe di Scienze}, pages
  1--49, June 2022.

\bibitem{chen_combined_2021}
L.~Chen, J.~Lee, and M.~Liew.
\newblock Combined {{Mean-Field}} and {{Semiclassical Limits}} of {{Large
  Fermionic Systems}}.
\newblock {\em Journal of Statistical Physics}, 182(2):24, Jan. 2021.

\bibitem{chong_global_2021}
J.~Chong, M.~Grillakis, M.~Machedon, and Z.~Zhao.
\newblock Global {{Estimates}} for the {{Hartree}}--{{Fock}}--{{Bogoliubov
  Equations}}.
\newblock {\em Communications in Partial Differential Equations},
  46(10):2015--2055, Sept. 2021.

\bibitem{chong_many-body_2021}
J.~J. Chong, L.~Lafleche, and C.~Saffirio.
\newblock From {{Many-Body Quantum Dynamics}} to the {{Hartree}}--{{Fock}} and
  {{Vlasov Equations}} with {{Singular Potentials}}.
\newblock {\em arXiv:2103.10946}, pages 1--74, Mar. 2021.

\bibitem{chong_dynamical_2022}
J.~J. Chong and Z.~Zhao.
\newblock Dynamical {{Hartree}}--{{Fock}}--{{Bogoliubov Approximation}} of
  {{Interacting Bosons}}.
\newblock {\em Annales Henri Poincar{\'e}}, 23(2):615--673, Feb. 2022.

\bibitem{de_gennes_superconductivity_1999}
P.-G. De~Gennes.
\newblock {\em Superconductivity of {{Metals}} and {{Alloys}}}.
\newblock Advanced Book Program, Perseus Books, Reading, Mass., 1st edition,
  Mar. 1999.

\bibitem{dobrushin_vlasov_1979}
R.~L. Dobrushin.
\newblock Vlasov {{Equations}}.
\newblock {\em Functional Analysis and Its Applications}, 13(2):115--123, 1979.

\bibitem{dong_dynamics_2021}
X.~Dong.
\newblock {\em Dynamics of {{Fermionic Many-Body Systems}}}.
\newblock PhD thesis, University of Maryland, College Park, 2021.

\bibitem{figalli_semiclassical_2012}
A.~Figalli, M.~Ligab{\`o}, and T.~Paul.
\newblock Semiclassical {{Limit}} for {{Mixed States}} with {{Singular}} and
  {{Rough Potentials}}.
\newblock {\em Indiana University Mathematics Journal}, 61(1):193--222, 2012.

\bibitem{gerard_homogenization_1997}
P.~G{\'e}rard, P.~A. Markowich, N.~J. Mauser, and F.~Poupaud.
\newblock Homogenization {{Limits}} and {{Wigner Transforms}}.
\newblock {\em Communications on Pure and Applied Mathematics}, 50(4):323--379,
  1997.

\bibitem{golse_mean_2016}
F.~Golse, C.~Mouhot, and T.~Paul.
\newblock On the {{Mean Field}} and {{Classical Limits}} of {{Quantum
  Mechanics}}.
\newblock {\em Communications in Mathematical Physics}, 343(1):165--205, Apr.
  2016.

\bibitem{golse_schrodinger_2017}
F.~Golse and T.~Paul.
\newblock The {{Schr{\"o}dinger Equation}} in the {{Mean-Field}} and
  {{Semiclassical Regime}}.
\newblock {\em Archive for Rational Mechanics and Analysis}, 223(1):57--94,
  Jan. 2017.

\bibitem{golse_mean-field_2021}
F.~Golse and T.~Paul.
\newblock Mean-{{Field}} and {{Classical Limit}} for the {$N$}-{{Body Quantum
  Dynamics}} with {{Coulomb Interaction}}.
\newblock {\em Communications on Pure and Applied Mathematics}, pages 1--35,
  Mar. 2021.

\bibitem{golse_semiclassical_2021}
F.~Golse and T.~Paul.
\newblock Semiclassical {{Evolution With Low Regularity}}.
\newblock {\em Journal de Math{\'e}matiques Pures et Appliqu{\'e}es},
  151:257--311, July 2021.

\bibitem{golse_optimal_2022}
F.~Golse and T.~Paul.
\newblock Optimal {{Transport Pseudometrics}} for {{Quantum}} and {{Classical
  Densities}}.
\newblock {\em Journal of Functional Analysis}, 282(9):109417, May 2022.

\bibitem{graffi_mean-field_2003}
S.~Graffi, A.~Martinez, and M.~Pulvirenti.
\newblock Mean-{{Field Approximation}} of {{Quantum Systems}} and {{Classical
  Limit}}.
\newblock {\em Mathematical Models and Methods in Applied Sciences},
  13(01):59--73, Jan. 2003.

\bibitem{grillakis_beyond_2013}
M.~G. Grillakis and M.~Machedon.
\newblock Beyond {{Mean Field}}: {{On}} the {{Role}} of {{Pair Excitations}} in
  the {{Evolution}} of {{Condensates}}.
\newblock {\em Journal of Fixed Point Theory and Applications}, 14(1):91--111,
  Sept. 2013.

\bibitem{grillakis_pair_2017}
M.~G. Grillakis and M.~Machedon.
\newblock Pair {{Excitations}} and the {{Mean Field Approximation}} of
  {{Interacting Bosons}}, {{II}}.
\newblock {\em Communications in Partial Differential Equations}, 42(1):24--67,
  Jan. 2017.

\bibitem{grillakis_uniform_2019}
M.~G. Grillakis and M.~Machedon.
\newblock Uniform in {$N$} {{Estimates}} for a {{Bosonic System}} of
  {{Hartree}}--{{Fock}}--{{Bogoliubov Type}}.
\newblock {\em Communications in Partial Differential Equations},
  44(12):1431--1465, Dec. 2019.

\bibitem{hainzl_blowup_2010}
C.~Hainzl, E.~Lenzmann, M.~Lewin, and B.~Schlein.
\newblock On {{Blowup}} for {{Time-Dependent Generalized Hartree}}--{{Fock
  Equations}}.
\newblock {\em Annales Henri Poincar{\'e}}, 11(6):1023--1052, Nov. 2010.

\bibitem{hainzl_bardeencooperschrieffer_2016}
C.~Hainzl and R.~Seiringer.
\newblock The {{Bardeen}}--{{Cooper}}--{{Schrieffer Functional}} of
  {{Superconductivity}} and its {{Mathematical Properties}}.
\newblock {\em Journal of Mathematical Physics}, 57(2):021101, Feb. 2016.

\bibitem{huang_global_2022}
X.~Huang.
\newblock Global {{Uniform}} in {$N$} {{Estimates}} for {{Solutions}} of a
  {{System}} of {{Hartree}}--{{Fock}}--{{Bogoliubov Type}} in the
  {{Gross}}--{{Pitaveskii Regime}}.
\newblock {\em arXiv:2206.06269}, pages 1--49, June 2022.

\bibitem{iacobelli_enhanced_2024}
M.~Iacobelli and L.~Lafleche.
\newblock Enhanced {{Stability}} in {{Quantum Optimal Transport
  Pseudometrics}}: {{From Hartree}} to {{Vlasov}}--{{Poisson}}.
\newblock {\em arXiv:2401.05773}, pages 1--18, Jan. 2024.

\bibitem{lafleche_propagation_2019}
L.~Lafleche.
\newblock Propagation of {{Moments}} and {{Semiclassical Limit}} from
  {{Hartree}} to {{Vlasov Equation}}.
\newblock {\em Journal of Statistical Physics}, 177(1):20--60, Oct. 2019.

\bibitem{lafleche_global_2021}
L.~Lafleche.
\newblock Global {{Semiclassical Limit}} from {{Hartree}} to {{Vlasov
  Equation}} for {{Concentrated Initial Data}}.
\newblock {\em Annales de l'Institut Henri Poincar{\'e} C, Analyse non
  lin{\'e}aire}, 38(6):1739--1762, Nov. 2021.

\bibitem{lafleche_quantum_2023}
L.~Lafleche.
\newblock Quantum {{Optimal Transport}} and {{Weak Topologies}}.
\newblock {\em arXiv:2306.12944}, pages 1--25, June 2023.

\bibitem{lafleche_quantum_2024}
L.~Lafleche.
\newblock On {{Quantum Sobolev Inequalities}}.
\newblock {\em Journal of Functional Analysis}, 286(10):110400, May 2024.

\bibitem{lafleche_strong_2023}
L.~Lafleche and C.~Saffirio.
\newblock Strong {{Semiclassical Limits}} from {{Hartree}} and
  {{Hartree}}--{{Fock}} to {{Vlasov}}--{{Poisson Equations}}.
\newblock {\em Analysis \& PDE}, 16(4):891--926, June 2023.

\bibitem{lenzmann_minimizers_2010}
E.~Lenzmann and M.~Lewin.
\newblock Minimizers for the {{Hartree}}--{{Fock}}--{{Bogoliubov Theory}} of
  {{Neutron Stars}} and {{White Dwarfs}}.
\newblock {\em Duke Mathematical Journal}, 152(2):257--315, Apr. 2010.

\bibitem{lewin_hartree_2020}
M.~Lewin and J.~Sabin.
\newblock The {{Hartree}} and {{Vlasov Equations}} at {{Positive Density}}.
\newblock {\em Communications in Partial Differential Equations},
  45(12):1702--1754, Sept. 2020.

\bibitem{lions_sur_1993}
P.-L. Lions and T.~Paul.
\newblock Sur les mesures de {{Wigner}}.
\newblock {\em Revista Matem{\'a}tica Iberoamericana}, 9(3):553--618, 1993.

\bibitem{marcantoni_dynamics_2023}
S.~Marcantoni, M.~Porta, and J.~Sabin.
\newblock Dynamics of {{Mean-Field Fermi Systems}} with {{Nonzero Pairing}}.
\newblock {\em arXiv:2310.15280}, pages 1--43, Oct. 2023.

\bibitem{markowich_classical_1993}
P.~A. Markowich and N.~J. Mauser.
\newblock The {{Classical Limit}} of a {{Self-Consistent Quantum Vlasov
  Equation}}.
\newblock {\em Mathematical Models and Methods in Applied Sciences},
  3(01):109--124, Feb. 1993.

\bibitem{narnhofer_vlasov_1981}
H.~Narnhofer and G.~L. Sewell.
\newblock Vlasov {{Hydrodynamics}} of a {{Quantum Mechanical Model}}.
\newblock {\em Communications in Mathematical Physics}, 79(1):9--24, Mar. 1981.

\bibitem{saffirio_semiclassical_2019}
C.~Saffirio.
\newblock Semiclassical {{Limit}} to the {{Vlasov Equation}} with {{Inverse
  Power Law Potentials}}.
\newblock {\em Communications in Mathematical Physics}, 373(2):571--619, Mar.
  2019.

\bibitem{saffirio_hartree_2020}
C.~Saffirio.
\newblock From the {{Hartree Equation}} to the {{Vlasov}}--{{Poisson System}}:
  {{Strong Convergence}} for a {{Class}} of {{Mixed States}}.
\newblock {\em SIAM Journal on Mathematical Analysis}, 52(6):5533--5553, Jan.
  2020.

\bibitem{solovej_many_2014}
J.~P. Solovej.
\newblock Many {{Body Quantum Mechanics}}.
\newblock Course notes, University of Vienna, Wien,
  https://web.math.ku.dk/{\textasciitilde}solovej/MANYBODY/mbnotes-ptn-5-3-14.pdf,
  Mar. 2014.

\bibitem{spohn_vlasov_1981}
H.~Spohn.
\newblock On the {{Vlasov}} hierarchy.
\newblock {\em Mathematical Methods in the Applied Sciences}, 3(1):445--455,
  1981.

\end{thebibliography}

\end{document}